\newtheorem{theorem}{Theorem}[section]
\newtheorem{corollary}[theorem]{Corollary}
\newtheorem{lemma}[theorem]{Lemma}
\newtheorem{definition}[theorem]{Definition}
\newtheorem{remark}[theorem]{Remark}
\newcommand{\grigorisnote}[1]{{{#1}}}
\renewcommand{\hat}{\widehat}
\renewcommand{\tilde}{\widetilde}
\newcommand{\wphi}{\tilde{\phi}}
\def\min{\qopname\relax n{min}}
\def\max2{\qopname\relax n{max2}}
\def\max{\qopname\relax n{max}}
\def\argmin{\qopname\relax n{argmin}}
\def\argmax{\qopname\relax n{argmax}}
\def\opt{\qopname\relax n{OPT}}
\def\Pr{\qopname\relax n{\mathbf{Pr}}}
\def\Ex{\qopname\relax n{\mathbb{E}}}
\newcommand{\expect}[2]{\Ex_{#1}\left[#2\right]}
\def\1{\mathbb{I}}
\def\A{\mathcal{A}}
\def\D{\mathcal{D}}
\def\E{\mathcal{E}}
\def\F{\mathcal{F}}
\def\M{\mathcal{M}}
\def\S{\mathcal{S}}
\def\T{\mathcal{T}}
\def\W{\mathcal{W}}
\def\X{\mathcal{X}}
\def\RR{\mathbb{R}}
\def\part{P}
\def\wt{\widetilde}
\def\wh{\widehat}
\def\eps{\varepsilon}
\newenvironment{lp*}{\begin{equation*}  \begin{array}{lll}}{\end{array}\end{equation*}}
\newcommand{\ctr}{\text{ctr}}
\newcommand{\Vs}{V^*}
\newcommand{\dd}{\mathrm{d}}
\newcommand{\good}{\textsc{Good}}
\newcommand{\bad}{\textsc{Bad}}
\newif\ifdraft
\newcommand{\anote}[1]{\textcolor{blue}{AM: {#1}}}
\newcommand{\chris}[1]{\todo[inline,size=footnotesize,color=red!20]{\textbf{Chris:} {#1}}}
\newcommand{\aranyak}[1]{\todo[inline,size=footnotesize,color=orange!20]{\textbf{Aranyak:} {#1}}}
\newcommand{\zhe}[1]{\todo[inline,size=footnotesize,color=blue!20]{\textbf{Zhe:} {#1}}}
\newcommand{\yang}[1]{\todo[inline,size=footnotesize,color=green!20]{\textbf{Yang:} {#1}}}
\newcommand{\modify}[1]{\textcolor{red}{{#1}}}
\newcommand{\remove}[1]{\textcolor{red}{\sout{{#1}}}}
\newcommand{\anote}[1]{}
\newcommand{\chris}[1]{}
\newcommand{\aranyak}[1]{}
\newcommand{\zhe}[1]{}
\newcommand{\yang}[1]{}
\newcommand{\modify}[1]{#1}
\newcommand{\remove}[1]{}
\newcommand{\modifywww}[1]{{#1}}
\begin{document}

\title{User Response in Ad Auctions: An MDP Formulation of Long-term Revenue Optimization}
\author{
    Yang Cai \\ Yale University \\ \texttt{yang.cai@yale.edu}
    \and
    Zhe Feng \\ Google Research \\ \texttt{zhef@google.com}
    \and
    Christopher Liaw \\ Google Research \\ \texttt{cvliaw@google.com}
    \and
    Aranyak Mehta \\ Google Research \\ \texttt{aranyak@google.com}
    \and
    Grigoris Velegkas\footnote{Part of this work was done while the author was a research intern at Google.} \\ Yale University\\ \texttt{grigoris.velegkas@yale.edu}
}

\maketitle
\begin{abstract}
We propose a new Markov Decision Process (MDP) model for ad auctions to capture the user response to the quality of ads, with the objective of maximizing the long-term discounted revenue. By incorporating user response, our model takes into consideration all three parties involved in the auction (advertiser, auctioneer, and user). The state of the user is modeled as a user-specific click-through rate (CTR) with the CTR changing in the next round according to the set of ads shown to the user in the current round.  We characterize the optimal mechanism for this MDP as a Myerson's auction with a notion of modified virtual value, which relies on the value distribution of the advertiser, the current user state, and the future impact of showing the ad to the user. Leveraging this characterization, we design a sample-efficient and computationally-efficient algorithm which outputs an approximately optimal policy that requires only sample access to the true MDP and the value distributions of the bidders. Finally, we propose a simple mechanism built upon second price auctions with personalized reserve prices and show it can achieve a constant-factor approximation to the optimal long term discounted revenue.


\end{abstract}

\section{Introduction}
Auctions have proven to be highly robust mechanisms for price discovery and for optimizing revenue or social welfare. Recent high impact applications of auctions include auctions for internet advertising~\citep{Varian06,EOS07} and wireless spectrum allocations~\citep{LeytonBrownMS17}.
The classic design and analysis of auctions naturally focuses on the outcomes for two parties: the auctioneer and the bidders, i.e., the seller and the buyers.
The seminal works of \citet{vickrey1961counterspeculation}, \citet{Clarke71}, and \citet{Groves73} give auction mechanisms which guarantee an outcome with the optimal \emph{social welfare}. Another seminal work of \citet{myerson1981optimal} gives the design of optimal auctions, i.e., those which maximize \emph{revenue} for the auctioneer.

However, in many applications, there are additional parties involved in the transaction. We are specifically interested in the domain of internet advertising auctions. Besides the auction platform and the advertisers, an important party involved in the auction is the \emph{user} who is viewing and interacting with the ads on the search results page. The business objective of the auction is to provide relevant ads to the user.
For example, in sponsored search, the user is issuing a search query with some intent of purchasing a good or a service, and the goal of the advertising system is to connect the user to sellers who can provide the desired goods or services.


The key metric for capturing this business objective is the \emph{long-term} social welfare or revenue across millions of repeated auctions.
Most prior works on ad auctions (see exceptions in Section~\ref{sec:related}), focus on single-shot auctions and do not incorporate user response to ad quality, thus are
unable to capture any sort of long-term effects.
As two examples, a user may see completely irrelevant ads, or see seemingly relevant and useful ads which turn out to be malware.
Both these examples result in very poor experiences to the user and the user may no longer interact with any ads shown in future.
This effect has been nicely captured by \citet{HohnholdOT15} in a paper which established the empirical importance of showing higher quality ads.
They established that user satisfaction is driven by the quality of ads viewed or clicked in the past and described an experimental design which measures long-term effects on the users' propensity to click on ads: they show, based on real-world and large-scale experiments, how low-quality ads can lead to \emph{ads blindness}, i.e., the user will stop interacting with ads in the future, even if the future ads are relevant and of good quality. Similarly, high quality ads can lead to \emph{ads sightedness}. We discuss other related work in Section~\ref{sec:related}.

In this paper, we capture this important empirical observation via a theoretical model which is amenable to an auction design analysis.
Specifically, we propose a model based on a Markov Decision Process (MDP) to capture the user's response to the quality of the ad.
With this model in mind, we design an auction that uses these signals to obtain an optimal (or approximately optimal) auction in terms of the long-term revenue.


\textbf{Our Model.~}
We model the setting (details in Section~\ref{sec:model}) as a repeated interaction between a user, who is modelled using an MDP, and an ads system, who is the decision maker of the MDP. 
In each auction, ad candidate $i$ comes with a bid $b_i$.
The user state is modeled via their click-through-rate (CTR), which we think of as the user's propensity to click.\footnote{The user state can be considered as a multiplier for the ad-specific CTR. Note that, in this paper, we hide the ad-specific CTR into the value of the ad.}
We assume that the auctioneer knows, or can estimate, the impact on the user's CTR when a set of ads is shown to the user. In Section~\ref{sec:simple-mechanism}, we further assume that in each auction, ad candidate $i$ also comes with a signal $q_i$.
While our theory in Section~\ref{sec:simple-mechanism} does not require the signal to have any semantic meaning, we will think of $q_i$ as a quality signal.

\subsection{Our Results}\label{sec:results}
In this model, we first provide (Section~\ref{sec:opt-auction}) a characterization of the long-term (discounted) revenue-optimal auction which balances both the (short-term) revenue considerations per round and the positive or negative longer-term effects of showing a good or bad ad.
A well-known result in the reinforcement learning literature is that the auction that optimizes the long-term discounted revenue must satisfy a recursive equation known as the Bellman Equation and that the optimal auction can be found using an algorithm such as value iteration~\citep{puterman2014markov}.
Naively, this would require that we optimize a particular function over the (infinite-sized) space of all possible auctions.
Thus, it is not a priori clear whether or not this optimization problem is even tractable.

Interestingly, we show that the long-term revenue-optimal auction takes a recognizable form.
A seminal result due to \citet{myerson1981optimal} showed that,
when bidders' valuations are drawn from some regular distribution, the revenue optimal auction maximizes \emph{virtual welfare}, which is a function of both the bid and the value distribution of each bidder. In our model, we define the notion of a \emph{modified virtual welfare}
which consists of the original virtual welfare plus a correction term that takes into account the long-term impact of showing a particular set of ads. This correction term is calculated based on the MDP formulation and depends on the current user state and the set of ads shown.
We show that the long-term revenue-optimal auction is the one which maximizes this modified virtual welfare in each round, and prices the ads accordingly. In other words, the optimal auction is a Myerson auction with modified virtual welfare.
One immediate benefit of such a characterization is that, in the single-slot setting, the optimization problem in the Bellman Equation now becomes tractable.

\modifywww{
We next consider the question of whether we can \emph{learn} an approximately optimal mechanism when we do not know the MDP or the bidders' value distributions. 
More specifically, we consider the generative model where one is allowed sample access to the MDP for any given state-outcome pair.
We show that the problem can be essentially decoupled so that we can separately learn the MDP transitions and the value distributions.
While our techniques are inspired by existing literature on learning optimal policies for MDPs and learning approximately optimal mechanisms, we note that there are some technical differences that we need to handle.
First, the majority of prior work on learning optimal policies assumes that the state transition depends on the action taken, where the action set is either finite or exhibits certain linear structure.
In our setting, the set of actions corresponds to the set of all possible mechanisms, which is an infinite set with complex structure.
However, the set of possible outcomes, where an outcome corresponds to the set of shown ads, is finite. Thus, using our characterization about the
structure of the optimal mechanism we show
how to adapt the known approaches from
the RL literature to learn an approximately
optimal policy, where the sample complexity depends on the cardinality
of the outcome space rather than the cardinality of the action space.
Second, most results for learning approximately optimal mechanisms rely on the structure of the revenue-optimal auction. In particular,
the mechanism that these algorithms
output never allocates to bidders whose
estimated virtual value is negative.
However, recall from the previous paragraph that the mechanism that optimizes the long-term revenue maximizes the \emph{modified} virtual welfare in each round and, thus, depending on the structure
of the estimated MDP, it could allocate to such
bidders with negative virtual value.
Hence, one can see that the problem of estimating
the MDP and the problem of learning the 
the allocation rule of the auction are highly
interdependent.
For this reason, existing results are not directly applicable.
}

Finally, we follow the spirit of the ``simple versus optimal'' literature \citep{HR2009simple} in seeking a mechanism whose structure is similar to a second-price auction that can approximate the long-term revenue under user response.
This has two key benefits.
First, the pricing and allocation is more transparent to the advertisers.
Second, from the auctioneer's perspective, this reduces the design space to help make the auction tuning more tractable in practice.
One particular implementation of this, which is similar to the one that will be discussed in this paper, is to first filter out all bidders that do not meet their personalized reserve. We then allocate to the highest bidder in the auction and they are charged the higher of their personalized reserve and the second-highest bid.
\citet{HR2009simple} show that this simple auction with the appropriate reserves (in particular, the Myerson reserve) achieves a $2$-approximation of the optimal revenue in the single-shot setting.
In our setting, we explore what can be achieved using auctions that fall into this family of second-price auction with personalized reserves.

Our third result (Section~\ref{sec:simple-mechanism}) shows that there is indeed a version of a second-price auction with personalized reserves
which provides a constant factor approximation to the long-term revenue-optimal auction from Section~\ref{sec:opt-auction}.
A technical challenge in designing such an auction is that the auction may cause the state transition of the user to behave very differently than the optimal auction.
In particular, if we use such an auction, we need to utilize the personalized reserve prices to control the user state transitions and use this
as a proxy to trade-off the current round revenue with the long-term impact.
Ideally, we would like for two things to be true. First, we want that the personalized reserves that we introduce induces user state transitions that are very similar, or even identical, to the user state transitions of the optimal mechanism. Second, we would like to guarantee that, at each step of the auction, the personalized reserves are chosen in a way that the revenue is a constant factor approximation to the revenue obtained using the long-term revenue-optimal auction.
Indeed, it turns out that it is possible to achieve both desiderata simultaneously.

To summarize, our work generalizes classical auction theory to the setting of long-term revenue optimization with user response. We bring together the reinforcement learning theory and auction theory by using an MDP to capture the user response.
Additionally, we characterize the optimal mechanism for this setting by combining the Bellman Equation and Myerson's auction. \grigorisnote{Building upon this characterization, we design algorithms that learn approximately optimal mechanisms from samples.}
Furthermore, we extend ideas from the simple versus optimal mechanism design literature to our setting.

\subsection{Related Work}\label{sec:related}
A number of other models have been studied in the literature that incorporate some form of user response.
\citet{AtheyE11} and \citet{LindenMC11} consider a single-stage position auction where a user has some budget and clicking on an advertisement incurs some search cost and consumes a portion of this budget.
Some relevant literature, including \citet{AbramsS07, SchroedlKNNR10, ThompsonL13, LahaieP07, BachrachC14} incorporate the user's externality in determining ad \emph{placement}.
While these papers focus on some form of user response for a single query, our model is focused on understanding user response \emph{across} queries. Another point of view of this paper is that we use an MDP to capture the effect of future user response created by showing an ad in the current round, and incorporate that into the auction design.

In another set of related works, \citet{Li13, Stourm17} use ``shadow costs" or ``hidden costs" to capture the effect of negative user experiences on the ad platform's future revenue.
They design good auctions to maximize long-term revenue assuming these costs are given. 
\citet{AshlagiEL13} consider a mathematical model where there may be multiple search engines competing for users. The user has a cost for user a search engine depending on a search cost and also their ``distance'' from a search engine. They study equilibria of search engines in this setting.
In this paper, we use an MDP model to provide microfoundations for these costs (or gains) and show how to learn the optimal auction without knowing the costs (or gains) a priori.

Another stream of related work includes the problem of dynamic mechanism design which has been heavily studied in the literature (e.g.~\cite{PS03, BergemannV10, AtheyS13, ChawlaDKS22, MirrokniLTZ16, MirrokniPRZ18, PavanST14, KakadeLN13}; see also the surveys ~\cite{BergemannS10, BergemannV19} and the references therein).\todo{However, these literatures are advertiser-based whereas ours is user-based. Also add Yang's comment below here. They assume advertisers are long-lived while we assume advertisers are short-lived.}
Some of these works also consider an MDP setting \cite{KakadeLN13, PavanST14}. For example, \citet{KakadeLN13} consider a setting where the \emph{advertisers'} values evolve according to a Markov process, although their specific setting requires separability assumptions that do not capture our setting involving the user.
A key differentiator between these sequences of works on dynamic mechanisms is that they assume the \emph{advertisers} are long-lived and they evolve over time, say via an MDP.
Thus, the advertisers are assumed to satisfy a long-term incentive compatibility constraint which themselves resemble Bellman Equations (for example, see Section 2 of \cite{BergemannV19}). \modifywww{On the other hand, we assume that advertisers are static\footnote{Our model doesn't require the advertisers are fixed at each round as long as we know the set of the advertisers.} and our model assumes that it is the \emph{user} that evolves over time}. As a result, we only need to guarantee that the auction is incentive compatible for each individual round which allows us to have a particularly clean characterization of the optimal mechanism.


\todo{Some works in this literature also consider an MDP setting, however, the focus is usually on the advertiser response in the long term, and on dynamic IC constraints for the advertiser. This leads to a rich literature and surprising results such as full-value extraction in revenue. Our model is very different: the longer-term interation is with the user, while the advetisers ar short-lived. Thus the advertisers are considered to be myopically IC, and the long-term dynamics is focused ont he user-response. This allows us to provide a cleaner characteization.}
\todo{Their mechanisms are not really clean. For example, allocation may not even be monotone. Assuming myopic allows us to provide a cleaner characterization.}

\todo{Some missing literature mentioned in NeurIPS Rebuttal.}




\section{Model and Preliminaries}\label{sec:model}
In this section, we describe the model we consider in this paper. We assume that we are in a repeated auction setting with $n$ advertisers (bidders) competing for \modify{one of $k$ identical ad slots}
of a query from a single user in each round.\footnote{\modify{Our analysis never makes use of the fact that the number of advertisers are the same so we can allow the number of advertisers to be different in every round. For simplicity, we will assume that the number of advertisers in each round is the same.}}
At each round $t$, advertiser $i\in [n]$ has a value $v_i$ drawn from a (regular) distribution $\F_i$ with CDF $F_i$ and PDF $f_i$ \grigorisnote{whose support is bounded in $[0,1]$}.
\modify{In Section~\ref{sec:opt-auction}, we assume that showing a set $W$ of ads to the user has a known affect on the user; this is discussed more in the next paragraph. In Section~\ref{sec:simple-mechanism}, we assume instead that each ad $i$ comes with a signal $q_i \in \{-1, 1\}$. Although $q_i$ need not have any semantic meaning, we refer to it as a \emph{quality} signal where $q_i = 1$ means the ad is good and $q_i = -1$ means the ad is bad.}
\modify{We note that the value distributions and quality signals may change over time but since most of our analysis focus on a single point in time, we omit a subscript on the time.}
In this work, we also assume that the advertisers are myopic.
\modify{In other words, we assume that each advertiser aims to optimize their utility only in the current auction and does not try to optimize their utility across all auctions.}
We believe this is a valid assumption since the advertisers do not know the identity of the user \modify{and the advertisers may be involved in many other auctions all involving different users. In fact, an advertiser may not even see the same user twice.\footnote{For example, if a user submits two different search queries, it is quite likely that the set of advertisers for the two queries are disjoint.} Thus, it may be difficult for an advertiser} to directly benefit from any specific user response.

\modify{A novel contribution of the paper is to provide a natural, yet general, model of the effect of the ad quality on the user's responsiveness to ads in the future.}
We model the user effects of showing good or poor ads using a Markov Decision Process (MDP).
Suppose that the user, at round $t$ has a \emph{user-specific} click-through rate (or propensity to click), $\ctr_t$\footnote{\grigorisnote{For simplicity, we assume that all the slots are associated with the same click-through rate. Our results also hold when the click-through rate is different for each slot.}}, which is used to measure the probability that the user is willing to click an ad that is shown to her.
Without loss of generality,\footnote{Our model allows the ad-specific click-through rate, but we hide this factor in the value $v_i$ of each ad $i$.} we assume $\ctr_t$ is independent of the ad shown to the user.
\modify{If $W \subseteq [n]$ is the set of ads that are shown then}
we assume that that the next round click-through-rate $\ctr_{t+1}$ is drawn from some distribution
\modify{denoted by $P_W(\cdot | \ctr_t)$}.
\modify{In the case of single-slot auctions, we will also use $P_i$ to denote $P_{\{i\}}$ for $i \in [n]$ and $P_0$ to denote $P_{\emptyset}$.}
For our results in Section~\ref{sec:simple-mechanism}, \modify{which focuses on single-slot auctions}, we \modify{do make a further simplifying assumption that assumes}
the ctr in the next round $\ctr_{t+1}$ depends only on $\ctr_t$ and $q_i$ and not on the identity of the ad.
We also assume that quality is binary. Intuitively, this means an ad is either good or bad.


With this user model in mind, we can then describe the repeated auction setting as an MDP $\M = (\S, \A, P(\cdot|a, \ctr), R(a, \ctr))$ where
\begin{itemize}
    \item The set of states $\S$ is the interval $[0, 1]$.\footnote{For our results in \Cref{sec:learning approximately optimal policies}} We represent a state by $\ctr$ which denotes the user-level click-through rate.
    \item The set of actions $\A$ is the set of all mechanisms that the auctioneer may use.
    In this paper, we restrict the auctioneer to using incentive compatible mechanisms (see Subsection~\ref{subsec:auction_prelim} for some background in auction theory).
    \item $P(\cdot | a, \ctr)$ is the transition probability which determines the $\ctr$ of the user at the next state. \modify{We assume it is given as follows. Let $\D^a$ be the distribution of winners for the auction $a$ when the value distributions are $F_1, \ldots, F_n$. Then $P(\cdot | a, \ctr) = \expect{W \sim \D^a}{P_W(\cdot | \ctr)}$.}
    In particular, note that the transition is independent of the auction given which ads are shown.\footnote{Note that $P$ without a subscript will correspond to the transition of the chain given an action (i.e.,~auction) and $P_X$ denotes the transition when the ad set $X$ is shown.}
    \item $R(a, \ctr)$ is the (expected) reward function.
    Given a truthful mechanism $a$ with allocation $x^a(v) \in [0, 1]^n$ where $\sum_{i=1}^n x_i^a(v) \leq k$ and (expected) payment function $p^a(v) \in \RR^n_{\geq 0}$, we have that $R(a, \ctr) = \ctr \cdot \expect{}{\sum_{i=1}^n p_i^a(v)}$.
    Note that $(x^a, p^a)$ must be chosen in an incentive compatible manner (see Subsection~\ref{subsec:auction_prelim}).
\end{itemize}
Finally, we assume an infinite-horizon MDP setting and we let $\gamma < 1$ be the discount factor.
For a policy $\pi \colon \S \to \Delta_\A$, the long-term discounted revenue starting at state $\ctr_0$ is given by
\begin{eqnarray}\label{eq:value-function}
V^{\pi}(\ctr_0) = \expect{}{\sum_{t=0}^{\infty} \gamma^t \expect{a_t \sim \pi(\ctr_t)}{R(a_t, \ctr_t)}},
\end{eqnarray}
where $\ctr_{t+1} \sim P(\cdot | a_t, \ctr_t)$.
\modify{Note that a distribution over IC mechanisms is itself an IC mechanism, so $\Delta_{\A} = \A$. Thus, without loss of generality, given a state $\ctr_t$, we assume that the policy $\pi$ outputs a single (possibly, randomized) mechanism and so the policy is actually deterministic.}
Finally, we note that we can extend all our results to the finite-time horizon setting with the caveat that our policies may depend on the number of rounds $t$ so far.

\subsection{Standard Auction Preliminaries}
\label{subsec:auction_prelim}

In this work, we focus on incentive compatible (IC) mechanisms\footnote{In the single-parameter setting, Bayesian incentive compatibility (BIC) is equivalent to dominant strategy incentive compatibility (DSIC). So we use ``IC'' or ``truthful'' interchangeably in this paper, depending on context.}, i.e., reporting true value is the optimal bidding strategy for each bidder no matter how the other bidders bid.
Let $\mathcal{W} \subseteq [n]$ denote
the set of all subsets of $[n]$ that have
size at most $k,$ and $\Delta(\mathcal{W})$
denote the set of probability distributions
over $\mathcal{W}.$
For each IC auction $a\in \A$, we have an allocation rule $x^a: \RR_{\geq 0}^n \rightarrow \Delta(\mathcal{W}),$ and a payment function $p^a: \RR_{\geq 0}^n \rightarrow \RR_{\geq 0}^n$.
Here, $x^a_W(v)$ denotes the probability over the randomness of $a$ that
the set $W \subseteq \mathcal{W}$ is chosen,
and $p^a_i(v)$ denotes the (expected) payment of bidder $i \in [n],$ when the reports (bids) are $v$ (we may write $v$ as the bids since the auctions are assumed to be IC).
Given Myerson's Lemma~\citep{myerson1981optimal}, for any IC mechanism $a\in \A$, allocation $x^a(\cdot)$ is monotone with the input bid and the payment rule satisfies the  payment identity $p^a(b) = b\cdot x^a(b) - \int_0^b x^a(v) \, \dd v, \forall b \in \RR_{\geq 0}$. Therefore, we can rewrite the expected reward function $R(a, \ctr)$ as $R(a, \ctr) = \ctr \cdot \expect{v}{\sum_{i=1}^n x^a_i(v) \phi_i(v_i)}$
where $\phi_i(v) \coloneqq v - \frac{1-F_i(v)}{f_i(v)}$ is the virtual value function for each advertiser $i$. Throughout this paper, we assume value distribution $\F_i$ is regular so that $\phi_i(\cdot)$ is non-decreasing for each bidder $i$. Moreover, since we focus on IC mechanisms, we slightly abuse the notation to use allocation rule $x$ to represent the mechanism and the payment rule can be induced by $x$ following Myerson's Lemma~\citep{myerson1981optimal}. Throughout the paper, we interchange advertiser and bidder to refer to the same entity.
\section{Optimal Mechanism}
\label{sec:opt-auction}
In this section, we characterize the optimal policy in our MDP setting. \modify{Recall that the policy maps each state to an IC mechanism $a \in \A$.}
Following the standard notations in infinite-horizon MDP literature, let $V^*$ denote the value function under the optimal policy, i.e.~$V^*(\ctr) = \sup_{\pi} V^{\pi}(\ctr)$ for all $\ctr \in [0,1]$, where $V^\pi$ is defined in Eq.~\eqref{eq:value-function}.
For notation, for a set $W \subseteq [n]$, we let $P_W(\cdot | \ctr)$ denote the transition probability function if we show the ads in $W$ when the state is $\ctr$ \modify{and $P(\cdot | a,\ctr)$ denotes the overall transition probability (over the randomness in value distributions) when the auction is $a$ and the state is $\ctr$.}

\begin{theorem}
    \label{thm:opt_mdp_auction_multi}
    Suppose that the advertiser distributions are regular and that there are $k$ identical slots.
    In each state $\ctr$, the optimal (IC) mechanism allocates the slots to the advertisers in the set $W \subseteq [n]$ with $0 < |W| \leq k$ that maximizes 
    \begin{equation}
    \label{eqn:multi_slot_vv}
    \begin{aligned}
        \ctr \cdot \sum_{i \in W} \phi_i(v_i)
        ~+~ \gamma \left( \expect{\ctr' \sim P_{W}(\cdot | \ctr)}{V^*(\ctr')} - \expect{\ctr' \sim P_{\emptyset}(\cdot | \ctr)}{V^*(\ctr')} \right),
    \end{aligned}
    \end{equation}
    provided that Eq.~\eqref{eqn:multi_slot_vv} is positive (otherwise, the mechanism does not allocate).
\end{theorem}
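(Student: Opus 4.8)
The plan is to combine the Bellman optimality equation for the MDP $\M$ with Myerson's characterization of IC mechanisms, and then argue that the resulting pointwise optimization problem is solved by a monotone (hence implementable) allocation rule. First I would invoke the standard result (e.g.~\citep{puterman2014markov}) that, since rewards are bounded and $\gamma < 1$, the optimal value function $V^*$ is the unique fixed point of the Bellman optimality operator and that any policy acting greedily with respect to $V^*$ is optimal. Concretely, $V^*(\ctr) = \sup_{a \in \A}\{R(a,\ctr) + \gamma\,\expect{\ctr' \sim P(\cdot|a,\ctr)}{V^*(\ctr')}\}$, so it suffices to exhibit, for each state $\ctr$, an IC mechanism attaining this supremum and to check that it has the claimed form.

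Next I would rewrite the objective inside the supremum in virtual-value form. By Myerson's Lemma, $R(a,\ctr) = \ctr \cdot \expect{v}{\sum_{i} x_i^a(v)\phi_i(v_i)}$, and since $P(\cdot|a,\ctr) = \expect{W\sim\D^a}{P_W(\cdot|\ctr)}$, the transition term equals $\gamma\,\expect{W\sim\D^a}{\expect{\ctr'\sim P_W(\cdot|\ctr)}{V^*(\ctr')}}$. Writing $\sigma^a(\cdot|v)$ for the distribution over winning sets $W$ with $|W|\le k$ that $a$ induces on profile $v$ (so that $\D^a = \expect{v}{\sigma^a(\cdot|v)}$ and $x_i^a(v) = \Pr_{W\sim\sigma^a(\cdot|v)}[i\in W]$), I can pull the expectation over $v$ to the front and write the entire Bellman objective for $a$ as $\expect{v}{\expect{W\sim\sigma^a(\cdot|v)}{\Phi_\ctr(W,v)}}$, where $\Phi_\ctr(W,v) := \ctr\sum_{i\in W}\phi_i(v_i) + \gamma\,\expect{\ctr'\sim P_W(\cdot|\ctr)}{V^*(\ctr')}$. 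For each fixed $v$ the inner expectation is at most $\max_{|W|\le k}\Phi_\ctr(W,v)$, which yields an upper bound on $V^*(\ctr)$ in which the monotonicity (IC) constraint has been dropped and the winning set is optimized pointwise. Because $\Phi_\ctr(\emptyset,v) = \gamma\,\expect{\ctr'\sim P_\emptyset(\cdot|\ctr)}{V^*(\ctr')}$, subtracting this state-dependent constant turns the maximization into exactly Eq.~\eqref{eqn:multi_slot_vv}, with the empty set mapped to the value $0$; the pointwise optimum is therefore to allocate to the $W$ with $0<|W|\le k$ maximizing \eqref{eqn:multi_slot_vv} when that maximum is positive, and to no one otherwise.

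It then remains to show that this pointwise-optimal rule is achievable by an IC mechanism, which is where regularity enters and which I expect to be the crux. The key observation is that the long-term correction $\gamma(\expect{\ctr'\sim P_W}{V^*} - \expect{\ctr'\sim P_\emptyset}{V^*})$ depends on the winning set $W$ and the current state $\ctr$ only, and not on the reported values $v$. Hence, fixing $v_{-j}$ and varying $v_j$, the best value among sets containing $j$, namely $\max_{W\ni j,\,|W|\le k}\Phi_\ctr(W,v)$, is non-decreasing in $v_j$ because $\phi_j$ is non-decreasing by regularity, while the best value among sets excluding $j$ (including $\emptyset$) does not depend on $v_j$. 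Consequently bidder $j$ lies in the optimal set precisely on an upward-closed threshold interval of $v_j$, so the induced marginal allocation $x_j(v) = \1[j \in W^*(v)]$ is monotone in $v_j$. By Myerson's Lemma this allocation admits a payment rule making the mechanism IC, and the resulting mechanism attains the pointwise upper bound, hence attains the supremum in the Bellman equation.

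Finally I would tie the pieces together: the mechanism just constructed is greedy with respect to $V^*$ in every state, so by the standard MDP optimality result it defines an optimal policy, which establishes Theorem~\ref{thm:opt_mdp_auction_multi}. The main obstacles I anticipate are (i) justifying the interchange of the supremum over mechanisms with the expectation over $v$ — resolved by the relax-then-verify argument above, where the relaxation is tight precisely because regularity makes the pointwise optimizer monotone — and (ii) minor measurability and attainment issues in defining $W^*(v)$ through a consistent tie-breaking rule, together with confirming boundedness of rewards so that the Bellman machinery applies. The single-slot case is then the immediate specialization in which $\Phi_\ctr(\{i\},v) = \ctr\,\phi_i(v_i) + \gamma(\expect{\ctr'\sim P_i}{V^*} - \expect{\ctr'\sim P_\emptyset}{V^*})$ and the correction term is a constant offset to a non-decreasing virtual value.
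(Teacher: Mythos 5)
Your proposal is correct and follows essentially the same route as the paper's proof: both start from the Bellman optimality equation, decompose the transition term by conditioning on the winning set $W$ so that the objective becomes $\expect{v}{\sum_W x_W^a(v)\left(\ctr\sum_{i\in W}\phi_i(v_i) + M_W(\ctr) - M_\emptyset(\ctr)\right)}$ plus a constant, and then optimize pointwise over $W$. The one substantive addition in your write-up is the explicit verification that the pointwise optimizer is monotone in each $v_j$ (via regularity and the fact that the correction term is independent of $v$) and hence IC-implementable --- a step the paper's proof leaves implicit when it passes from the supremum over IC mechanisms to the unconstrained pointwise maximum --- and this is a worthwhile tightening rather than a different approach.
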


The proof of Theorem~\ref{thm:opt_mdp_auction_multi} is deferred to Appendix~\ref{apx:opt-auction}.
It is instructive to compare this mechanism with Myerson's optimal auction. Recall that Myerson's auction chooses the set $W$ that maximizes the virtual welfare, i.e.~$\sum_{i \in W} \phi_i(v_i)$; this latter term is precisely the first term of Eq.~\eqref{eqn:multi_slot_vv} (modulo the $\ctr$ term which is independent of the set $W$). Theorem~\ref{thm:opt_mdp_auction_multi} asserts that to optimize the discounted long term-revenue, it suffices to incorporate an additive correction term to the virtual welfare, i.e.~the second term of Eq.~\eqref{eqn:multi_slot_vv}.
Intuitively, the correction term can be thought as the long-term revenue impact of showing a set of ads compared to a baseline of not showing any ads.

\begin{remark}
    Regularity is not strictly necessary for Theorem~\ref{thm:opt_mdp_auction_multi}.
    Indeed, if the distributions are not regular then one may always use Myerson's ironing procedure~\citep{myerson1981optimal} to the modified virtual value, so that the resulting ironed modified virtual value is monotone in the bidder's value.
\end{remark}
\subsection{Single-slot auctions}
\label{subsec:singleslot_opt}
In this subsection, we show that we can further simplify the form of the optimal mechanism presented in the previous section when there is only a single slot.
Recall that, for $i \in [n]$, we let $P_i(\cdot | \ctr)$ denote the transition probability function if we show ad $i$ when the the state is $\ctr$, and let $P_0(\cdot | \ctr)$ denote the transition probability function if no ads are shown.
\begin{definition}[Modified virtual value]\label{def:modified_vv}
Fix an advertiser $i \in [n]$ whose value is drawn from a distribution $F_i$.
We define the \emph{modified virtual value} as
\begin{align*}
    \wphi_i (v_i, \ctr)
    = \ctr \cdot \phi_i(v_i) ~+~
    \gamma \left(\expect{\ctr' \sim P_i(\cdot | \ctr)}{V^*(\ctr')} - \expect{\ctr' \sim P_0(\cdot | \ctr)}{V^*(\ctr')}\right).
\end{align*}
\end{definition}
Following Theorem~\ref{thm:opt_mdp_auction_multi}, we immediately have,
\begin{corollary}
    \label{cor:opt_mdp_auction}
    Suppose that the advertiser distributions are regular.
    In each state $\ctr$, the optimal (IC) mechanism allocates to the advertiser that maximizes the modified virtual \modify{value} $\wphi_i(v_i, \ctr)$ if at least one of the modified virtual \modify{values} is positive (otherwise, the mechanism does not allocate).
\end{corollary}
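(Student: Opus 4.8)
The plan is to obtain Corollary~\ref{cor:opt_mdp_auction} as an immediate specialization of Theorem~\ref{thm:opt_mdp_auction_multi} to the single-slot case $k = 1$, inheriting the regularity hypothesis directly. First I would observe that when $k = 1$, the only subsets $W \subseteq [n]$ satisfying $0 < |W| \leq k$ are the singletons $W = \{i\}$ for $i \in [n]$. Consequently, the maximization over $W$ in Eq.~\eqref{eqn:multi_slot_vv} reduces to a maximization over the index $i \in [n]$, while the ``do not allocate'' branch continues to correspond to the empty set.

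Next I would substitute $W = \{i\}$ into Eq.~\eqref{eqn:multi_slot_vv}. The virtual welfare term $\ctr \cdot \sum_{j \in W} \phi_j(v_j)$ collapses to $\ctr \cdot \phi_i(v_i)$, and the correction term becomes $\gamma\left(\expect{\ctr' \sim P_{\{i\}}(\cdot | \ctr)}{V^*(\ctr')} - \expect{\ctr' \sim P_{\emptyset}(\cdot | \ctr)}{V^*(\ctr')}\right)$. Unwinding the single-slot notational conventions $P_i = P_{\{i\}}$ and $P_0 = P_{\emptyset}$, this sum is exactly the modified virtual value $\wphi_i(v_i, \ctr)$ of Definition~\ref{def:modified_vv}.

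Finally I would translate the allocation rule. Theorem~\ref{thm:opt_mdp_auction_multi} prescribes allocating to the maximizing set provided the maximal objective value is positive, and otherwise withholding allocation. Under the singleton specialization, the maximal objective value is $\max_i \wphi_i(v_i, \ctr)$, so this reads: allocate to the advertiser $i$ maximizing $\wphi_i(v_i, \ctr)$ whenever $\max_i \wphi_i(v_i, \ctr) > 0$ (equivalently, whenever at least one modified virtual value is positive), and otherwise do not allocate. This is precisely the statement of the corollary.

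Because the corollary is a pure specialization of an already-established result, I expect essentially no obstacle. The only step meriting a moment of care is the bookkeeping verification that the singleton restriction of Eq.~\eqref{eqn:multi_slot_vv} reproduces Definition~\ref{def:modified_vv} verbatim, which follows at once from substituting the $P_{\{i\}} = P_i$ and $P_{\emptyset} = P_0$ notation.
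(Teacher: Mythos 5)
Your proposal is correct and follows exactly the paper's route: the paper proves this corollary in one line by noting it is the $k=1$ specialization of Theorem~\ref{thm:opt_mdp_auction_multi}, with Eq.~\eqref{eqn:multi_slot_vv} for singleton $W=\{i\}$ coinciding with the modified virtual value of Definition~\ref{def:modified_vv}. Your version merely spells out the bookkeeping ($P_{\{i\}}=P_i$, $P_{\emptyset}=P_0$) that the paper leaves implicit.
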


\section{Learning Approximately Optimal Policies}\label{sec:learning approximately optimal policies}
In the previous section,
we developed the optimal mechanism when we have access to the valuation distribution of the bidders
and the transition probability matrix of the MDP.
In this section we present approaches to design
\emph{approximately}-optimal policies \modify{(mechanisms)} which require only
\emph{sample} access to the valuation distribution of the bidders
and the transition matrix; this is known as the \emph{generative model} in the Reinforcement Learning literature. We start with an intermediate
setting where the valuation distributions are known, but the
auctioneer only has sample access to the transition matrix of the
MDP. In particular, we assume that for every state-outcome pair $(\ctr,W) \in \S \times [n+1]^k$
the learner can obtain samples from $P_W(\cdot|\ctr).$ In this setting,
we design an \modify{efficient} algorithm that computes an $\varepsilon$-optimal
policy with probability $1-\delta$ 
using $\mathrm{poly}(1/(1-\gamma),1/\varepsilon, \log(1/\delta),|\S|, 
n^k)$ samples from the MDP. \grigorisnote{Its running time 
is polynomial in the number of samples.} We
remark that we assume a discretization
on the state space $\S.$ 

Subsequently, we show
how to handle the case where both the MDP and the valuation distributions
are unknown.
We present an algorithm
that computes an $\varepsilon$-optimal policy with probability
at least $1-\delta$ using $\mathrm{poly}(1/\varepsilon, 1/(1-\gamma), \log(1/\delta), |\S|, n^k)$ samples from the MDP
and $\mathrm{poly}(1/\eps,1/(1-\gamma), \log(1/\delta), |\S|,n,k)$ samples from the valuation distributions. Moreover,
the running time of the algorithm is \grigorisnote{ $\mathrm{poly}(1/\varepsilon, 1/(1-\gamma), \log(1/\delta), |\S|, n^k)$}.
The omitted results and proofs of this section can be found
in \Cref{apx:approximately optimal policies}.

\subsection{Learning Approximately Optimal Policies: Known Valuations, Unknown MDP}
\label{sec:learning approximately optimal policies, known valuations, uknown MDP}

In the setting where the MDP (transition matrix) is unknown and the valuation distributions
are known, our approach is conceptually simple: we use a large enough number of samples
to estimate an empirical MDP and then we compute an approximately optimal policy
with respect to that MDP. We can show that, with high probability, that policy
will also be approximately optimal with respect to the true MDP. This approach is inspired by a line
of work in the RL literature (see, e.g. \cite{gheshlaghi2013minimax,agarwal2020model} and references therein).
The main technical challenge is to handle the fact that the action space 
of the auctioneer is infinite.
However, the key insight is that the transition of the MDP depends on a finite
number of different outcomes of the auction. 
In this section, we will use $\wh P$ to denote an empirical MDP.
We let $V^\pi$ and $\wh V^\pi$ be the value function of the policy $\pi$ with respect to the true MDP and empirical MDP, respectively.
Further, we let $V^\star$ and $\wh V^\star$ denote the optimal value function with respect to the true MDP and empirical MDP, respectively.

\begin{lemma}[Performance of Policies in Empirical MDPs]\label{lem:number of samples to construct a good empirical MDP}
     Consider a repeated $k$-slot auction among $n$ bidders. Let $\W \subseteq [n+1]^k$ be the set of potential outcomes of the auction. Let $\S$ be the state space. Let $\wh P$ be
     the empirical MDP that is constructed using $N$ samples from each state-outcome pair. Then, 
     for every policy $\pi$ and any $\delta >0 $, with probability at least
     $1-\delta$ over the random draw of the samples
      it holds that
     \begin{align*}
         ||V^\pi - \wh V^{\pi}||_\infty &\leq \frac{2\gamma\eps_{\mathrm{opt}}}{1-\gamma} + 3\frac{\gamma^2}{(1-\gamma)^3} \cdot \sqrt{\frac{2 \log (2|\S||\W|/\delta)}{N}} \\
         ||V^\star - \wh V^\star||_\infty &\leq \frac{\gamma}{(1-\gamma)^2} \cdot \sqrt{\frac{2 \log (2|\S||\W|/\delta)}{N}}
     \end{align*}
     where $\eps_{\mathrm{opt}} = ||\wh V^\star - \wh V^{\pi}||_\infty$.
 \end{lemma}
The proofs uses ideas from \cite{gheshlaghi2013minimax,agarwal2020model}
that are generalized to our setting
with an infinite action space.

 The following result, whose proof is deferred to \Cref{apx:approximately optimal policies}, is an adaptation of the main result of \cite{singh1994upper} which handles infinitely dimensional action
 spaces and allows for $\eps$-greedy policies.
\begin{theorem}[From Value Estimation to Policy Estimation (Adapted from \cite{singh1994upper})]\label{thm:q estimation to policy estimation}
    Let $\wt V \in \RR^{\S}$ be a value function
    such that $||\wt V - V^\star|| \leq \eps_{\mathrm{opt}}$. Let $\wt \pi: \S \times \A$
    be an $\eps'$-greedy policy with respect to $\wt V$, i.e., \begin{align*}
        R(\ctr,\pi(\ctr)) + \gamma \sum_{\ctr'} \Ex_{W \sim \D^{\pi(\ctr)}} [P_W(\ctr'|\ctr)] \cdot \wt V(\ctr') &\geq\\
        \max_{a \in \A} R(\ctr,a) + \gamma \sum_{\ctr'} \Ex_{W \sim \D^{a}} [P_W(\ctr'|\ctr)] \cdot \wt V(\ctr') - \eps' &,\forall \ctr \in \S \,.
    \end{align*}
     Then $||V^\star - V^{\wt \pi}||_\infty \leq \frac{2\gamma \eps_{\mathrm{opt}}+\eps'}{1-\gamma}.$
\end{theorem}

Finally, we will make use of the following forklore result. For completeness, we provide a short proof in \Cref{apx:approximately optimal policies} adapted to our setting
where the transition of the MDP does not depend directly on the
action that was taken.
\begin{theorem}[Approximate Bellman Update]\label{thm:approximate bellman update}
    Let $V^{(0)}(\ctr) = 0, \forall \ctr \in \S$. 
    For every $k \geq 1$ define
    \[
        V^{(k)} = \max_{a \in \A} R(\ctr,a) + \sum_{\ctr' \in \S}
         \Ex_{W \sim \D^a}[P_W(ctr'|ctr)] \cdot V^{(k-1)}(\ctr') + \eps_{k-1} \,,
    \]
    where $||\eps_{k-1}||_{\infty} \leq \varepsilon.$ Let $\pi_k$ be an $\eps'$-greedy
    policy with respect to $V^{(k)}$, i.e., for every $\ctr \in \S$ it holds that
    \begin{align*}
        R(\ctr,\pi_k(\ctr)) + \sum_{\ctr' \in \S}
         \Ex_{W \sim \D^{\pi_k(\ctr)}}[P_W(ctr'|ctr)] \cdot V^{(k)}(\ctr') \geq\\
         \max_{a \in \A} R(\ctr,a) + \sum_{\ctr' \in \S}
         \Ex_{W \sim \D^a}[P_W(ctr'|ctr)] \cdot V^{(k)}(\ctr') - \eps'  \,.
    \end{align*}  
    Then, 
    \[
        ||V^{\pi_k} - V^\star||_\infty \leq \frac{2\gamma}{(1-\gamma)^2} \cdot \left(\gamma^k  + \varepsilon +\frac{(1-\gamma)\eps'}{2\gamma}\right) \,.
    \]
\end{theorem}

Equipped with the previous results, we can show the following corollary, whose proof is postponed
to \Cref{apx:approximately optimal policies}.

 \begin{theorem}\label{thm:gen moden-known valuation-uknown transition}
     Let $\delta \in (0,1), \gamma \in (0,1), \varepsilon \in (0,1)$. Then, there is an algorithm that given full access to the valuation
     distributions of the bidders and $N = O\left(\frac{\log(|\S|\cdot |\W|/\delta)}{(1-\gamma)^6\eps^2}\right)$ samples from the true transition kernel
     of every state-outcome pair outputs a policy $\wh \pi$ such that, with probability at least $1-\delta$,
     $||V^\star - V^{\hat \pi}||_\infty \leq \varepsilon.$ The running time
     of the algorithm is $\mathrm{poly}(|\S|,|\W|,n,1/\eps,\log(1/\delta),1/(1-\gamma))$
 \end{theorem}

\subsection{Learning Approximately Optimal Policies for Single-Slot Auctions: Unknown Valuations, Unknown MDP}
\label{sec:learning approximately optimal policies, unknown valuations, uknown MDP}
We now proceed to the more challenging setting where both the MDP
and the valuation distributions of the bidders are unknown
to the seller. 
Similarly as before, our approach is divided into two steps: we first
use a sufficiently large number of samples from the generator to estimate
the transition probability of the MDP, and then we compute an 
approximately optimal policy with respect to the empirical MDP. The 
main challenge is to deal with the fact that the reward function
$R(\ctr,a)$ is unknown. In the traditional RL
literature this is not an issue since the sample complexity of estimating
the transition probability accurately is larger than the sample 
complexity required to estimate the reward function. However, in our
case there is an infinite number of actions so estimating $R(\ctr,a)$
for every $(\ctr,a) \in \S \times \A$ is non-trivial. Hence, we 
take a different approach that combines core RL algorithms, like value
iteration, with results regarding the sample complexity of 
estimating revenue optimal single-parameter auctions (see, e.g. \cite{cole2014sample, morgenstern2015pseudo, devanur2016sample, gonczarowski2017efficient, guo2019settling, brustle2020multi}). The crucial
observation is that, using \Cref{thm:opt_mdp_auction_multi} 
we can view every
step of the value iteration algorithm as a single-parameter
\emph{modified}
revenue-maximization problem, where we take into account both the 
current round revenue and the future revenue of the auction.
This result also shows that
the optimal auction in the $k$-th iteration for every state $\ctr \in \S$
is the one that, given any valuation profile $v \in [0,1]^n$ as input, 
allocates the slots 
to the bidders that maximize \Cref{eqn:multi_slot_vv},
assuming that this quantity is non-negative, otherwise it does not
allocate the slots. 

An important step of our approach is to design approximately optimal
auctions with respect to the \emph{modified} revenue objective and it 
is inspired by the work of \citet{devanur2016sample}.
Let us first provide the formal definition of this objective. 
\begin{definition}\label{def:modified revenue wrt V}
    Let $\gamma$ be the discount factor of the MDP, let $V: \S \rightarrow [0,1/(1-\gamma)]$
    and $\ctr \in [0,1].$ Denote $M$ as a truthful mechanism for a $k$-slot
    auction with allocation rule $x^M(v) \in [0,1]^n$
    and payment rule $p^M(v) \in \RR^n_{\geq 0}.$
    Let $\F = \F_1 \times \ldots \times \F_n$.
    The modified revenue objective with respect
    to $\ctr$ and $V$ is defined to be
    \begin{align*}
        \ctr\cdot\Ex_{v\sim \F}\left[\sum_{i=1}^n p^M_i(v)\right] +
        \gamma\sum_{\substack{W \subseteq [n] \\ |W| \leq k}}  \expect{\ctr' \sim P_{W}(\cdot | \ctr)}{V(\ctr')} \cdot \expect{v\sim F}{x^M_W(v)}.
    \end{align*}
\end{definition}

The approach of \cite{devanur2016sample} consists of two main steps. 
First, they show that by ``rounding'' the distribution
of the bidders to multiples of $\Theta(\eps/k)$ the revenue of the optimal mechanism
with respect to the rounded distribution will be close to the revenue
of the optimal mechanism with respect to the true distribution. Then, using
a uniform convergence result they show how to compute an approximately optimal
mechanism with respect to the modified distribution. 
\grigorisnote{An important difference in our setting is that the future revenue 
term that appears in \Cref{def:modified revenue wrt V} depends on the \emph{set}
of bidders that are selected by the auctioneer,
and cannot be decomposed across individual bidders.
The straightforward adaptation of the approach in \cite{devanur2016sample}
to handle the modified revenue objective induces sample 
complexity of the order $\mathrm{poly}(n^k)$, since there are $n^k$
different sets of bidders that need to be considered. We are able to circumvent
this issue by adding an extra step in this approach which consists of
a discretization in the \emph{virtual} valuation space.}

An important technical tool that we use is
a concentration bound that first appeared in \cite{babichenko2017empirical}
and was also used in \cite{devanur2016sample, gonczarowski2017efficient}. The version
of the bound we use is the one from \cite{devanur2016sample} and its formal statement
can be found in \Cref{lem:concentration of functions on empirical product distribution}
in \Cref{apx:approximately optimal policies}.
Essentially, this result shows that, for a sufficiently large number of samples, the 
expected revenue of any \emph{fixed} auction with respect to the uniform distribution 
on the samples is close
to its expected revenue with respect to the true distribution. 
In order to get a uniform
convergence result we need to restrict the number of different auctions
we consider. To that end, we first show that discretizing the valuation space will not change
the modified revenue of the optimal auction too much. The proof
is an adaptation of Lemma 6.3 in \cite{devanur2016sample} that allows us to handle
the extra term in the modified revenue objective. For completeness, we provide a short proof in \Cref{apx:approximately optimal policies} 

\begin{lemma}[{Adaptation of \cite[Lemma~6.3]{devanur2016sample}}]\label{lem:additive loss
after disicretization}
    Given any product value distribution $\F = \F_1 \times \ldots \times \F_n$ where $\F_i$ is supported
    on $[0,1], \forall i \in [n]$, and any 
    $\eps > 0$, let $\wh \F$ be the distribution that is obtained by rounding the values of
    $\F$ to the closest multiple of $\eps$ from below. Let $\wt \opt(\F), \wt \opt(\wh \F)$
    be the optimal modified revenue with respect $\F, \wh \F.$ Then, we have $\wt \opt(\wh \F) \geq \wt \opt(\F) - k\cdot\eps$,
    where $k$ is the number of slots.
\end{lemma}

\grigorisnote{The next step, which is the point of departure from \cite{devanur2016sample},
is to consider the class of \emph{threshold mechanisms}. Intuitively, these are mechanisms
that round down the virtual values of the bidders to the closest multiple of
some given resolution $\beta.$ This idea is inspired by the concept of $t$-level
auctions that appeared in \cite{morgenstern2015pseudo}.

\begin{definition}\label{def:threshold-auction-at-resolution-beta}
    Let $\F = F_1 \times \ldots \times F_n$ be the valuation distribution of $n$ bidders, 
    where $F_i$ is supported on $[0,1], \forall i \in [n].$
    Let $\S$ be the state space of the MDP, $\gamma$
    be the discount factor, $\ctr \in [0,1]$ be
    the click-through rate, $k$ be the number of slots and
    $V: \S \rightarrow [0,k/(1-\gamma)]$ be a value function.
    We say that an auction $M$ is a threshold auction at resolution $\beta$ with respect
    to the modified revenue objective if:
    \begin{itemize}
        \item For every input $v = (v_1,\ldots,v_n)$ and every bidder $i \in [n]$
        the mechanism uses a non-decreasing mapping $\sigma_i: [0,1] \rightarrow \{-\infty\} \cup 
        \left\{\frac{-2k}{1-\gamma},\frac{-2k}{1-\gamma}+\beta,\ldots,1 \right\}.$
        \item The mechanism allocates to the set of bidders $W$ with $0 < |W| \leq k$ that maximize
         \begin{align*}
        \ctr \cdot \sum_{i \in W} \sigma_i(v_i)
         ~+~ \gamma \left( \expect{\ctr' \sim P_{W}(\cdot | \ctr)}{V(\ctr')} - \expect{\ctr' \sim P_{\emptyset}(\cdot | \ctr)}{V(\ctr')} \right)\,,
    \end{align*}
    if this quantity is non-negative, and does not allocate to any bidder otherwise.
    \end{itemize}  
    The payment rule of the auction is the one that makes it truthful, i.e., follows Myerson's payment rule.
\end{definition}

We now show that there is a threshold auction at some appropriate resolution
whose modified revenue is close to the modified revenue of the optimal auction.
\begin{lemma}\label{lem:opt-threshold-auction-close-to-opt}
    Let $\F = \F_1 \times \ldots \times \F_n$ be the valuation distribution of $n$ bidders, 
    where $\F_i$ is supported on $[0,1], \forall i \in [n].$
    Let $\phi_i: [0,1] \rightarrow (-\infty, 1]$ be the (ironed) virtual valuation
    function of bidder $i$. Let $\S$ be the state space of the MDP, $\gamma$
    be the discount factor, $\ctr \in [0,1]$ be
    the click-through rate, $k$ be the number of slots and
    $V: \S \rightarrow [0,k/(1-\gamma)]$ be a value function.
    Let $\eps > 0.$
    Then, there is a threshold auction $M$ at resolution $\eps/k$ which has modified revenue
    $\wt{\mathrm{Rev}}(M, \F)$ that satisfies
    \[
        \wt{\mathrm{Rev}}(M, \F) \geq \wt \opt(\F) - \eps \,,
    \]
    where $\wt \opt(\F)$ is the optimal modified revenue with respect to $\F.$
\end{lemma}

}


We next show how we can get uniform convergence result for auctions that have this structure,
when the valuation distributions are discrete. \grigorisnote{Crucially, by doing
the discretization in the virtual value space we are able to avoid the dependence
on $n^k$ when we take the union bound over all the possible threshold mechanisms
in our uniform convergence argument.
}
\begin{lemma}\label{lem:uniform convergence of modified revenue of threshold auctions for discrete distributions}
    Given any product value distribution $\F = F_1 \times \ldots \times F_n$ where
    every $F_i$ has support $B \subseteq [0,1]$, and any $\eps > 0, \delta > 0$, there is an algorithm that takes as input $m$ i.i.d. samples from $\F$
    and outputs an auction whose
    modified revenue is
    at least $\wt \opt(\F) - \eps$, with probability at least $1-\delta$, whenever
    \[
        m = \wt O\left(  \frac{n\cdot|B|\cdot k^4}{\eps^2(1-\gamma)^2}\log(1/\delta)\right) \,,
    \]
    where $\wt \opt(\F)$ is the optimal modified revenue with respect to $\F.$
    Moreover, the running time of the algorithm is $\mathrm{poly}(m, |\S|).$
\end{lemma}

\grigorisnote{Combining the results we have discussed so far, 
we show how to construct an approximately optimal mechanism when the valuation
distributions are continuous. The formal statement of the result
can be found in \Cref{lem:uniform convergence of modified revenue of threshold auctions for continuous distributions} in \Cref{apx:approximately optimal policies}.
The algorithm, which appears in \Cref{alg:eps-optimal modified revenue mechanism from samples with thresholds} in \Cref{apx:approximately optimal policies}, consists of the
following main steps. First, it draws a large enough number of samples from $\F.$
Then, it rounds down all the samples to multiples of $O(\eps/k)$ and considers the 
empirical distribution $\wh F$ on the samples. Subsequently, it computes
the (ironed) virtual values $\wh \phi_i$ with respect to $\wh \F.$
Given any input profile
$v=(v_1,\ldots,v_n)$, it first rounds down the values of each $v_i$ to multiples of
$O(\eps/k)$, denoted by $\wh v_i$, and estimates the virtual values $\wh \phi_i(\wh v_i).$
Finally, it rounds down $\wh \phi_i(\wh v_i)$ to multiples of $\Theta(\eps/k)$, denoted
by $\wh w_i$, and allocates the slots to the set of bidders $W$ that maximize
the modified virtual welfare objective with respect to the values $\wh w_i$, assuming
that this objective is non-negative. It is worth mentioning that we can output
a description of this mechanism in polynomial time in the parameters of the problem,
however running it  and computing its modified revenue requires time $\mathrm{poly}(n^k).$}



\begin{algorithm}[ht!]
\floatname{algorithm}{Algorithm}
\caption{Approximately Optimal Policy Estimator from Samples}\label{alg:eps-optimal-policy-estimator-samples}
\begin{algorithmic}[1]
\STATE \textbf{Input}: Accuracy parameter $\eps$, confidence parameter $\delta$, discount factor $\gamma$, number of bidders $n$,
sample access to the generator
of the MDP, sample access $\F = \F_1 \times \ldots \times \F_n$ where $\F_i$ is valuation distribution for bidder $i$.
\STATE \textbf{Output}: A policy $\wh \pi:\S \rightarrow \A$
with the guarantee that, with probability $1-\delta, ||V^{\wh \pi} - V^\star||_\infty \leq \eps.$
\STATE $K \gets  O\left(\frac{\log(1/((1-\gamma)\eps))}{1-\gamma}\right)$ iterations.
\STATE $N \gets O\left(\frac{\log(|\S|\cdot |\W|/\delta)}{(1-\gamma)^6\eps^2}\right)$ samples from the MDP generator for every $(\ctr, W) \in \S \times \W.$
\STATE $M = \wt O\left(  \frac{|\S|\cdot n\cdot k^5}{\eps^3(1-\gamma)^{12}}\log(1/\delta)\right)$ total samples from the valuation distribution $\F$.
\STATE Draw $N$ samples from the generator of the MDP for
every $(\ctr, W) \in \S \times \W.$
\STATE Let $\wh P_W(\ctr'|\ctr) = \mathrm{count}(\ctr'|\ctr,W)/N, \forall (\ctr',\ctr,W) \in \S\times \S \times \W.$
\FOR{$k=1\ldots K$}
\STATE Run the algorithm from \Cref{lem:uniform convergence of modified revenue of threshold auctions for continuous distributions}
using $M/(K \cdot |\S|)$ fresh samples from $\F$ to estimate
$V^{(k)}(\ctr) = \max_{a \in \A} R(\ctr,a) + \gamma\sum_{\ctr' \in \S}\Ex_{W \sim \D^a}[\wh P_W(\ctr'|\ctr)] V^{(k)}(\ctr'), \forall \ctr \in \S.$
\ENDFOR
\STATE For every state $\ctr \in \S$ return an $\eps$-optimal solution
to the problem $\wh \pi(\ctr) = \argmax_{a \in \A}  R(\ctr,a) + \gamma\sum_{\ctr' \in \S}\Ex_{W \sim \D^a}[\wh P_W(\ctr'|\ctr)] V^{(k)}(\ctr')$ 
using the algorithm from \Cref{lem:uniform convergence of modified revenue of threshold auctions for continuous distributions}.
\end{algorithmic}
\end{algorithm}

We are now ready to present our approximately optimal policy estimator.
The high-level idea of our approach it to estimate an approximately optimal
policy with respect to the empirical MDP using the value iteration 
algorithm,
where in every step we perform an approximately optimal update using the result
from \Cref{lem:uniform convergence of modified revenue of threshold auctions for continuous distributions}.
We are able to show that these approximately
optimal updates suffice in order to end up with 
an approximately optimal policy. 

\begin{theorem}\label{thm:approximately-optimal-policy-estimator-unknown-MDP-unknown-F}
    Let $\eps, \delta > 0$ be the error bound and the confidence
    bound, respectively. Let $\gamma$ be the discount factor of 
    the MDP and $n$ be the number of bidders.
    Then, \Cref{alg:eps-optimal-policy-estimator-samples} given $M = \wt O\left(  \frac{|\S|\cdot n\cdot k^5}{\eps^3(1-\gamma)^{12}}\log(1/\delta)\right)$
    samples from the valuation distribution of the bidders $\F = \F_1 \times \ldots \times \F_n$
    and $N = O\left(\frac{\log(|\S|\cdot |\W|/\delta)}{(1-\gamma)^6\eps^2}\right)$ samples from the generator of the MDP for every state-output pair $(\ctr, W) \in \S \times \W$, 
    outputs
    a policy $\wh \pi$ which, with probability at least $1-\delta$ satisfies
    \[
        ||V^{\pi} - V^\star||_\infty \leq \eps \,.
    \]
    Moreover, the running time of the algorithm is polynomial in the 
    number of samples.
\end{theorem}

The proof follows by combining results we have
discussed so far and can be found in \Cref{apx:approximately optimal policies}.
\section{Constant-factor Approximation Simple Mechanism}\label{sec:simple-mechanism}
\todo{Proof of concept that a SPA with reserves could work.}
As discussed in \Cref{sec:opt-auction}, the revenue-optimal mechanism is a Myerson's auction with modified virtual value.
\modify{However, Myerson's auction can be quite complex in general, even in the single-slot setting.}
In this section, we focus on designing a \emph{simple} mechanism, for single slots, to achieve a constant-approximation to the optimal mechanism characterized in Section~\ref{subsec:singleslot_opt}, following the spirit of the area of \emph{``simple versus optimal mechanism''}~\citep{HR2009simple}.
In particular, our aim is to design a mechanism that is similar in spirit to a second-price auction with personalized reserves. The version of such an auction that is most relevant to this section works as follows. First we remove all advertisers whose bid is below their personalized reserve. Among the remaining bidders, we allocate to the highest bidder and charge them the larger of their reserve and the second-highest remaining bid. This is known as a second-price auction with eager reserves \cite{Paeslema16}.
Note that this section focuses on the Bayesian Mechanism Design setting where the transition probability function
$P_i(\cdot|\ctr)$, value function $V^*$, and value distributions $F_i$ are all known.

The main challenge of designing a good \emph{simple} mechanism to achieve a constant-factor approximation to the optimal mechanism is that we need to take both the revenue guarantee in each round and the MDP transition into account. For example, if we have a \emph{simple} auction that achieves a good revenue in round $t$ but leads to a bad transition for the user's click-through rate in the next round, then this may hurt the revenue dramatically in the long run. In an extreme case, the user may leave the platform and the seller loses all future revenue if our proposed simple mechanism incurs a terrible transition.


Given the above intuition, the key to our simple mechanism is to ensure that (i) we can obtain a constant approximation to the optimal mechanism in each round and (ii) the transitions of our simple mechanism \emph{exactly} matches the transition of the MDP induced by the optimal mechanism.
In this section, we assume that each ad comes with a quality signal $q_i \in \{1, -1\}$ and that the transition functions depend only on the quality and not the actual ad shown.
We refer to an ad with $q_i=1$ as a good ad and an ad with $q_i=-1$ as a bad ad, although we do not require the assumption that showing a good ad is ``better'' than showing a bad ad.
A formal description of the mechanism can be found in Mechanism~\ref{alg:simple-mechanism} in Appendix~\ref{subsec:two_stage_spa}.
Our main result in this section is the following theorem.
\begin{theorem}
    \label{thm:simple_main}
    Assume that the set of qualities has cardinality $2$ and that the user state transitions depend only on the ad's quality.
    Given a policy $\pi^*$ with value function $V^*$, there is a policy $\pi$ that, for each state (i.e.,~in every round), the policy uses a two-stage second-price auction with personalized reserves and obtains an $8$-approximation to the long-term discounted revenue obtained by $\pi^*$. In fact, $V^{\pi}(\ctr) \geq \frac{1}{8} \Vs(\ctr)$ for all $\ctr \in \S$.
\end{theorem}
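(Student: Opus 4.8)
The plan is to establish the $8$-approximation by combining two ideas that are flagged in the problem statement: matching the user state transitions exactly, and securing a per-round constant-factor revenue guarantee. The key structural fact from Corollary~\ref{cor:opt_mdp_auction} is that the optimal mechanism, in each state $\ctr$, allocates to whichever advertiser maximizes the modified virtual value $\wphi_i(v_i, \ctr)$, and only when this value is positive. Since quality is binary and transitions depend only on the quality bit $q_i \in \{-1,1\}$, the correction term in $\wphi_i$ takes only two possible values: let $\Delta_{+}(\ctr) = \gamma\bigl(\expect{\ctr' \sim P_{+}(\cdot|\ctr)}{\Vs(\ctr')} - \expect{\ctr' \sim P_0(\cdot|\ctr)}{\Vs(\ctr')}\bigr)$ for good ads and $\Delta_{-}(\ctr)$ analogously for bad ads. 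So the optimal allocation amounts to comparing $\ctr \cdot \phi_i(v_i) + \Delta_{q_i}(\ctr)$ across $i$. The first step is to make this explicit and to observe that the winner's \emph{quality} under the optimal mechanism is itself a random variable whose law over $\{+,-,\textsc{none}\}$ is determined by the value distributions, $\ctr$, and the two correction terms.

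**Matching transitions via reserves.**

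The second step is to design the simple mechanism to reproduce this distribution over winner-qualities exactly, so that the induced MDP transition $P(\cdot|\pi(\ctr),\ctr)$ is identical to that of the optimal policy. I would partition the advertisers by quality into a good pool and a bad pool, and run a two-stage auction: within each pool, use a second-price auction with a personalized (eager) reserve, and then arbitrate between the two pool-winners. The reserves must be chosen so that (a) an advertiser of quality $q$ clears its reserve precisely when $\ctr\cdot\phi_i(v_i) + \Delta_q(\ctr) > 0$, i.e.\ the reserve is $\phi_i^{-1}\bigl(-\Delta_q(\ctr)/\ctr\bigr)$ (using regularity so $\phi_i$ is invertible and the eager-reserve auction is well-defined), and (b) the cross-pool tie-break selects the same quality that the optimal mechanism would show. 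Because only the quality of the shown ad—not its identity—drives the transition, it suffices to match the induced distribution over $\{+,-,\textsc{none}\}$; one shows that with these reserves the event ``a good ad is shown'' under the simple mechanism coincides with the event ``the optimal modified-virtual-value maximizer is a good ad,'' and similarly for bad and for no-allocation. This makes the two value functions satisfy the \emph{same} Bellman recursion in the transition term, so the comparison reduces to a per-round revenue comparison.

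**Per-round revenue and the factor of $8$.**

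With transitions matched, the third step decouples time: I would set up the inequality $V^\pi(\ctr) \ge \tfrac{1}{8}\Vs(\ctr)$ by an inductive / fixed-point argument on the Bellman equation, where the only slack comes from the single-round reward. Concretely, if the expected per-round revenue of the simple mechanism is at least $\tfrac{1}{c}$ times that of the optimal mechanism \emph{conditioned on matching transitions}, then since both policies propagate the future through the identical transition kernel, the discounted sum inherits the same factor $\tfrac1c$. The $8 = 2\times 2\times 2$ structure suggests three independent factor-$2$ losses: one factor $2$ is the classical \citet{HR2009simple} loss from replacing Myerson's optimal single-item auction by a second-price auction with monopoly reserves; a second factor $2$ plausibly comes from splitting the advertisers into two quality pools and only running the auction within the winning pool (losing the contribution of the other pool); and the third factor $2$ likely accounts for the mismatch between the revenue-maximizing reserve and the transition-matching reserve forced by requirement (a) above. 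I would bound the per-round revenue pool-by-pool using the Bulow–Klemperer / prophet-style argument underlying the simple-versus-optimal result, carefully tracking that the reserves I am forced to use for transition-matching still recover a constant fraction of $\ctr\cdot\expect{}{\phi_{\text{winner}}}$.

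**Main obstacle.**

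The hardest part will be the tension in choosing the reserves: the reserve that \emph{matches the transition} (determined entirely by the sign of the modified virtual value, hence by $\Delta_q(\ctr)/\ctr$) is generally \emph{not} the reserve that maximizes per-round revenue (the Myerson monopoly reserve). I must show that committing to the transition-matching reserve still leaves a constant-factor revenue guarantee—i.e.\ that the correction term $\Delta_q(\ctr)$, while it perturbs the effective reserve away from the revenue-optimal one, cannot destroy more than a bounded fraction of the revenue. The subtlety is that $\Delta_q(\ctr)$ can be negative (a bad ad depresses future value), pushing the effective threshold \emph{above} the monopoly reserve and potentially suppressing allocation; conversely a large positive $\Delta_q$ may force allocation below the monopoly reserve at a revenue loss. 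I expect to control this by comparing, term by term in the Bellman unfolding, the simple mechanism's reward-plus-future against the optimal one, exploiting that the future terms cancel exactly (by transition-matching) so that only the bounded single-round ratio survives.
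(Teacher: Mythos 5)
Your high-level architecture is the same as the paper's: partition by quality, match the probability of showing each quality so the induced transition kernel coincides with that of $x^*$, and then propagate a per-round constant factor through the Bellman recursion (your last step, and the paper's proof of Theorem~\ref{thm:simple_main}, are essentially identical). The gap is in the per-round revenue guarantee, and it is exactly the point you flag as the ``main obstacle'' but then defer: the transition terms cancelling does \emph{not} make the single-round ratio bounded. If $\Delta_q(\ctr)>0$, your transition-matching reserve $\phi_i^{-1}(-\Delta_q(\ctr)/\ctr)$ lies \emph{below} the monopoly reserve $\phi_i^{-1}(0)$, and a second-price auction with sub-monopoly reserves has no constant-factor guarantee against the optimal auction with those reserves: with a single bidder uniform on $[0,1]$ and reserve $0$, the expected revenue is $\expect{}{\phi(v)}=0$. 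The Hartline--Roughgarden-style $2$-approximation (Lemma~\ref{lem:generalized_spa_2_approx} in the paper) is only proved under the hypothesis $r_i\geq\phi_i^{-1}(0)$, so your first factor of $2$ is not available at the reserves you commit to, and your third factor of $2$ (``mismatch between the revenue-maximizing and transition-matching reserve'') has no argument behind it; indeed $\expect{v}{x_i^*(v)\phi_i(v_i)}$ can itself be negative for individual bidders, so ``recovering a constant fraction'' of it is not even well-posed bidder by bidder.

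The missing device is the paper's clip-and-compensate step. It floors every reserve in the favored pool $S_1$ at the monopoly reserve, yielding an auxiliary allocation $x'$ with $\expect{v}{x_i'(v)\phi_i(v_i)}\geq\max\{0,\expect{v}{x_i^*(v)\phi_i(v_i)}\}$; this makes Lemma~\ref{lem:generalized_spa_2_approx} applicable but \emph{lowers} the probability of allocating to $S_1$. The probability is then restored not by moving all reserves back down, but by selecting the bidder $i^*\in S_1$ contributing the least revenue under $x'$, writing off its contribution (a loss of at most $1-1/|S_1|\geq 1/2$ when $|S_1|\geq 2$, with $|S_1|=1$ handled separately), and re-tuning only $r_{i^*}$ so that the probability some bidder in $S_1$ clears its reserve equals $p_1=\sum_{i\in S_1}\expect{v}{x_i^*(v)}$ exactly. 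The resulting $8=2\cdot2\cdot2$ is (keep only the better of $S_{\good},S_{\bad}$) $\times$ (drop $i^*$) $\times$ (SPA versus optimal at monopoly-floored reserves), which differs from your decomposition. Two further corrections: the mechanism only matches the \emph{marginal} probability of showing each quality (sufficient since transitions depend only on quality), not the allocation event itself --- an exact event match would force the mechanism to rank by modified virtual value, i.e.\ to be $x^*$ rather than a second-price auction --- and it secures revenue only from the favored pool, matching the other pool's probability with a lottery-gated reserve-free SPA that contributes no revenue guarantee.
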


\begin{remark}
    Let $S_{\good}$ (resp.~$S_{\bad}$) be the set of ``good'' (resp.~``bad'') ads. In addition, let $k = \min\{|S_{\good}|, |S_{\bad}|\}$.
    If $k \geq 2$ then our proof shows that $V^{\pi}(\ctr) \geq \frac{1}{4} \left( 1 - \frac{1}{k} \right) \Vs(\ctr)$ for all $\ctr \in \S$.
    Thus, as the number of good and bad ads tend to infinity, we obtain a $4$-approximation.
\end{remark}


Our key building block is the following mechanism.


\begin{lemma}
    \label{lem:two_stage_spa_guarantee}
    For any state of the MDP, there is a two-stage second-price mechanism $x$ with the following guarantee. Given a mechanism $x^*$ and a disjoint partition $S_1, S_2$ of $[n]$, the mechanism guarantees:
    \begin{enumerate}
        \item The mechanism $x$ allocates to some bidder in $S_1$ with the same probability as $x^*$ and also for $S_2$.
        In other words $\sum_{i \in S_j} x_i = \sum_{i \in S_j} x_i^*$ for $j \in \{1, 2\}$.
        \item The revenue that $x$ obtains from $S_1$ is at least $1/4$ of the revenue that $x^*$ obtains from $S_1$.
    \end{enumerate}
\end{lemma}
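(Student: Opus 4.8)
The plan is to fix the state $\ctr$ and work with the single-round, single-slot structure. By Corollary~\ref{cor:opt_mdp_auction} the optimal mechanism $x^*$ awards the slot to $\argmax_i \wphi_i(v_i,\ctr)$ whenever this maximum is positive. Writing $\Phi_j \coloneqq \max_{i\in S_j}\wphi_i(v_i,\ctr)$ for $j\in\{1,2\}$, the optimal mechanism allocates to $S_1$ exactly on the event $A_1 \coloneqq \{\Phi_1 > \max(0,\Phi_2)\}$ and to $S_2$ on $A_2 \coloneqq \{\Phi_2 > \max(0,\Phi_1)\}$ (ties have measure zero). The mechanism $x$ I would build (Mechanism~\ref{alg:two_stage_spa}) is genuinely two-stage: in the first stage it inspects $S_2$ and forms the threshold $\tau \coloneqq \max(0,\Phi_2)$ that $S_1$ must beat; in the second stage it runs a second-price auction with eager personalized reserves on $S_1$, where the reserve of bidder $i\in S_1$ is $r_i \coloneqq \inf\{v : \wphi_i(v,\ctr)\ge \tau\}$. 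Since $\wphi_i(\cdot,\ctr)$ is nondecreasing by regularity, bidder $i$ clears its reserve iff $\wphi_i(v_i,\ctr)\ge\tau$, so the second stage allocates to some bidder of $S_1$ iff $\Phi_1\ge\tau$, which is exactly $A_1$. Folding the $S_2$-threshold $\tau$ into the reserves (rather than using only the modified monopoly reserve $\inf\{v:\wphi_i(v,\ctr)\ge 0\}$) is the key design choice. Because both mechanisms see the same realized $v$, this yields property~1 for $S_1$ by a pointwise coupling; allocating to $S_2$ precisely on the complementary event $A_2$ (the pricing there is irrelevant, as property~2 only concerns $S_1$) matches $x^*$'s $S_2$-probability as well.

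With property~1 true by construction, the content is property~2, which I would prove by conditioning on the $S_2$-randomness, i.e.\ on $\tau$. Since $\tau$ depends only on the $S_2$ values and is independent of the $S_1$ values, conditioning on $\tau$ makes $x^*$ behave on $S_1$ exactly like the modified-Myerson auction on $S_1$ alone with the common virtual reserve $\tau$, while $x$ behaves exactly like eager second-price with the $\tau$-induced reserves $r_i$. The plan is then a Hartline--Roughgarden ``simple versus optimal'' argument run against this common reserve level $\tau$ and with $\wphi$ in place of $\phi$: split the optimal $S_1$-revenue into the part extracted at the reserve and the part extracted by within-$S_1$ competition, and charge each against the corresponding part of the eager second-price payment. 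This loses one factor of $2$ for the reserve-versus-competition split and a further factor of $2$ for bounding the second-highest modified virtual value by the eager second-price payment, giving the claimed $1/4$; averaging back over $\tau$ preserves the bound.

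The hard part will be controlling the interaction between the two groups together with the additive correction term in $\wphi_i(v,\ctr)=\ctr\,\phi_i(v)+\gamma(\cdots)$. Two difficulties interact. First, the threshold $\tau$ is random and determined by $S_2$, so the reserves are random and the fixed-reserve analysis has to be carried out conditionally on $\tau$; absorbing $\tau$ into the reserves is precisely what prevents $x^*$ from extracting a super-constant multiple of our revenue through heavy $S_2$ competition, and one must check that the winner's payment in $x$ always dominates the $\tau$-contribution of $x^*$'s payment. Second, when the correction term is positive, the per-round virtual value $\phi_i$ can be negative exactly where $x^*$ still allocates, so the identity ``expected payment $=$ expected virtual value'' cannot be combined with the naive monotonicity ``smaller allocation $\Rightarrow$ smaller revenue''; instead I would argue directly at the level of actual second-price payments and threshold bids, using that a winning bidder always pays at least its nonnegative reserve $r_i$. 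Verifying that these two bounds compose to $1/4$, uniformly over $\ctr$ and over the conditioning on $\tau$, is the crux of the argument.
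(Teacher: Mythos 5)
Your construction for property~1 is fine (and the pointwise coupling via the actual $S_2$ values is, if anything, cleaner than the paper's use of fresh samples $\tilde v_i$), but your argument for property~2 has a genuine gap, and it is exactly the one you flag in your last paragraph without resolving. Your reserves are $r_i=\inf\{v:\wphi_i(v,\ctr)\ge\tau\}$, which in value space means $\phi_i(r_i)=(\tau-\Delta_i)/\ctr$; whenever $\Delta_i>\tau$ this sits strictly \emph{below} the monopoly reserve $\phi_i^{-1}(0)$. The ``simple versus optimal'' machinery you invoke (the paper's Lemma~\ref{lem:generalized_spa_2_approx}) requires $r_i\ge\phi_i^{-1}(0)$ precisely so that on the event where the eager SPA and the benchmark disagree, the SPA winner's \emph{unmodified} virtual value is non-negative and the corresponding term can be dropped from the Myerson identity $\Ex[\sum_i p_i]=\Ex[\sum_i x_i\phi_i(v_i)]$. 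With sub-monopoly reserves that term can be negative and the truncation step fails. Running the split ``with $\wphi$ in place of $\phi$'' does not repair this: the quantity you must lower-bound is the unmodified virtual welfare $\sum_{i\in S_1}\Ex[x_i^*(v)\phi_i(v_i)]$ (that is what revenue equals), not modified virtual welfare, and there is no payment identity for $\wphi$. ``The winner pays at least its nonnegative reserve'' gives a nonnegative payment but no multiplicative relation to $\Ex[\sum_i x_i^*\phi_i]$ on the agreement event, so the claimed $2\times 2$ accounting is not a proof.

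The paper closes this gap with a structurally different mechanism rather than a sharper analysis. It first defines an auxiliary allocation $x'$ (Line~\ref{line:two_stage_new_alloc}) that adds the floor $\max\{\cdot,0\}$ to every $S_1$ reserve, i.e.\ forces $r_i\ge\phi_i^{-1}(0)$; this only \emph{increases} the $S_1$-revenue benchmark since it removes the negative-$\phi_i$ contributions, and it restores the hypothesis of Lemma~\ref{lem:generalized_spa_2_approx}. But it also makes $x'$ allocate to $S_1$ less often than $x^*$, breaking property~1. The fix is to discard the least profitable bidder $i^*\in S_1$ (costing a factor $1-1/|S_1|\ge 1/2$, with $|S_1|=1$ handled separately) and re-tune only $i^*$'s reserve so the overall probability of clearing some reserve in $S_1$ is exactly $p_1$ again, taking no revenue credit from $i^*$. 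The $1/4$ then comes from $(1-1/|S_1|)\ge 1/2$ times the factor $2$ of Lemma~\ref{lem:generalized_spa_2_approx}, not from your reserve-versus-competition split. To complete your write-up you would either need to import this truncate-and-resacrifice step, or supply a new argument showing that an eager SPA with reserves strictly below the monopoly price still $4$-approximates the optimal-with-the-same-allocation-event benchmark; the latter is precisely the ``slightly tricky'' issue the paper says prior one-shot results do not cover.
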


We now provide a high-level description on how to use the mechanism from Lemma~\ref{lem:two_stage_spa_guarantee} to prove Theorem~\ref{thm:simple_main}; the details are in Appendix~\ref{apx:omitted-proofs-simple-mechanism}.
Given a state $\ctr$ of the MDP, we let $S_{\good} = \{i \,:\, q_i = 1\}$ be the set of good ads and $S_{\bad} = \{i \,:\, q_i = -1\}$ be the set of bad ads.
We first compute the revenue contribution from $S_{\good}$ and $S_{\bad}$ according to $x^*$ for the current round.
Let this be $R_{\good}$ and $R_{\bad}$, respectively.
Assume that $R_{\good} \geq R_{\bad}$; the argument is analogous when $R_{\bad} \geq R_{\good}$.
We set $S_1 = S_{\good}$ and $S_2 = S_{\bad}$ and run the mechanism from Lemma~\ref{lem:two_stage_spa_guarantee}.
The second guarantee from Lemma~\ref{lem:two_stage_spa_guarantee} ensures that the revenue that we obtain from $S_{\good}$ is at least $R_{\good}/4$.
Since $R_{\good} \geq R_{\bad}$, it follows that we have an $8$-approximation to the revenue that is obtained from $x^*$.
Finally, note that the first guarantee of Lemma~\ref{lem:two_stage_spa_guarantee} and the fact that the transition depends only on the quality of the ad,
it follows that we can exactly track the state transition of $x^*$.


We conclude this section by discussing the main ideas of the proof of Lemma~\ref{lem:two_stage_spa_guarantee}.
If $|S_1| = 1$ then in $x^*$, the sole bidder in $S_1$ is simply facing a reserve price that is determined by the other bidders.
Thus, we offer this bidder the item at the same reserve price.
The more interesting case is when $|S_1| \geq 2$.
In the classical setting, it is known that the following mechanism yields a $2$-approximation to the optimal revenue~\cite{HR2009simple,Fu13,Paeslema16}.
First, reject every bidder $i$ with $\phi_i(v_i) < 0$ and then run a second-price auction among the remaining bidders.
A naive extension would be to replace $\phi_i$ with the modified virtual value and $0$ with the largest modified virtual value among bidders in $S_2$.
However, this could allocate to a bidder whose (unmodified) virtual value is negative which is a technical challenge for the analysis.
Thus, we add one additional step to increase the reserve of all but one of the bidders in $S_1$ to ensure that they are only allocated when their virtual value is non-negative.
Since this decreases the probability that a bidder in $S_1$ wins, we compensate by decreasing the reserve of the remaining bidder.
This turns out to make the analysis significantly simpler since the analysis can focus on the bidders with increased reserves.
Finally, we allocate to an arbitrary bidder in $S_2$ (say, for free) with some probability to obtain the first guarantee of Lemma~\ref{lem:two_stage_spa_guarantee}.

\section{Discussions and Future Work}\label{sec:discussion}
\paragraph{Beyond binary types.~}
The characterization of the optimal MDP mechanism and our learning result
in Section~\ref{sec:opt-auction} and~\ref{sec:learning approximately optimal policies} do not require that the types are binary.
On the other hand, we did make small use of the binary types in Section~\ref{sec:simple-mechanism} in designing a simple two-stage mechanism.
However, it is straightforward to extend the analysis to the case where the bidders can be partitioned into $k$ ``categories'' in which case we would achieve a $4k$-approximation.
We leave it as another open question to see whether it is possible to obtain a $O(1)$-approximation, or even a $o(k)$-approximation.


\paragraph{Better than $8$-approximation for simple auctions.~}
We do not believe that the constant of $8$ is tight in Theorem~\ref{thm:simple_main}.
In fact, as the number of good and bad advertisers grows, we are able to improve the constant arbitrary close to $4$.
Moreover, we note that when splitting it up into good and bad ads, we only focus on the revenue from one of the groups which incurs a factor $2$ in the constant.
Thus we conjecture that the constant can be made closer to $2$.

\paragraph{Dependence on state.~}
The mechanisms we described in this paper require knowledge of the user's state (i.e.,~its CTR).
A natural question to ask is whether or not this is necessary.
First, we note that a static mechanism has no hope of approximating the optimal MDP mechanism.
Indeed, consider the following simple example.
There are three CTR states $\{0, \sfrac{1}{2}, 1\}$ and two advertisers.
Advertiser $1$ always has value $\eps$ for some small $\eps$ and is a good ad while advertiser $2$ has value $1$ and is a bad ad.
The transitions are defined as follows.
We assume $0$ is an absorbing state.
In states $\sfrac{1}{2}$ and $1$, showing a good ad causes a deterministic transition to $1$ while showing a bad ad causes a deterministic transition to $0$ and $\sfrac{1}{2}$, respectively.
The initial state is $\sfrac{1}{2}$.
In this case, the optimal mechanism is to alternate between showing a good and bad ad to achieve per-round revenue of roughly $1/2$.
On the other hand, any static mechanism can obtain per-round revenue of at most $\eps$.
However, this example does not necessarily rule out a dynamic, but stateless mechanism.
For instance, one can consider using the past history of the shown ads and one can also observe ``bandit-style'' feedback to obtain some estimate of the user CTR.
We leave this as yet another challenging open question to explore.






\bibliographystyle{plainnat}
\bibliography{ref}
\newpage
\appendix
\begin{center}
{\huge Appendix}   
\end{center}

\section{Omitted Proofs from Section~\ref{sec:opt-auction}}\label{apx:opt-auction}

\begin{proof}[Proof of Theorem~\ref{thm:opt_mdp_auction_multi}]
    \todo{intuition, explanation}
    For a mechanism $a \in \A$, let $x^a$ denote the allocation of $a$ with the payment defined as discussed in Subsection~\ref{subsec:auction_prelim}.
    We begin with the Bellman equations which assert that
    \begin{equation}
        \label{eq:opt_bellman}
        \Vs(\ctr)
        = \max_{a \in \A} \left\{
            R(a, \ctr) + \gamma \cdot \expect{\ctr' \sim P(\cdot | a, \ctr)}{\Vs(\ctr')}
        \right\}.
    \end{equation}
    Let $X^a$ be the random set that gives the winning set of advertisers in the auction.
    Observe that
    \begin{align}
        & \expect{\ctr' \sim P(\cdot|a, \ctr)}{\Vs(\ctr')}
         ~=~ \expect{}{\expect{\ctr'}{V^*(\ctr') | X^a}} \notag \\
        & ~=~ \sum_{\substack{W \subseteq [n] \\ |W| \leq k}} \expect{\ctr'}{V^*(\ctr') | X^a = W} \cdot P(X^a = W) \notag \\
        & ~=~ \sum_{\substack{W \subseteq [n] \\ |W| \leq k}} \expect{\ctr' \sim P_{W}(\cdot | \ctr)}{V^*(\ctr')} \cdot \expect{v}{x_W^a(v)},
        \label{eq:opt_split_sum}
    \end{align}
    where in the last equality, we used that $P(X^a = W) = \expect{v}{x_W^a(v)}$.
    Also, let $M_W(\ctr) = \gamma \cdot \expect{\ctr' \sim P_W(\cdot | \ctr)}{\Vs(\ctr')}$.
    Plugging Eq.~\eqref{eq:opt_split_sum} into Eq.~\eqref{eq:opt_bellman},
    we get that
    \begin{align}
        & \Vs(\ctr)  \notag \\
        & = \max_{a \in A} \biggl\{
            \sum_{\substack{W \subseteq [n] \\ 0 < |W| \leq k}} \Bigl[
                \ctr \cdot \Ex_{v}\Bigl[x_W^a(v) \sum_{i \in W} \phi_i(v_i)\Bigr]
                \quad + \expect{v}{x_W^a(v) \cdot M_W(\ctr)}
            \Bigr] + \expect{v}{x_0^a(v)\cdot M_{\emptyset}(\ctr)}
        \biggr\} \notag \\
        & = \max_{a \in \A} \biggl\{
            \sum_{\substack{W \subseteq [n] \\ 0 < |W| \leq k}} \Bigl[
                \Ex_{v}\Bigl[x_W^a(v) \cdot \bigl(\ctr \cdot \sum_{i \in W} \phi_i(v_i) 
                \quad + M_W(\ctr) - M_{\emptyset}(\ctr)\bigr)\Bigr] \Bigr]
        \biggr\} + M_{\emptyset}(\ctr). \label{eq:opt_final}
    \end{align}
    In the first equality, we used that
    \[
        R(a,\ctr) =\sum_{\substack{W \subseteq [n] \\ 0 < |W| \leq k}} \ctr \cdot \expect{v}{x_W^a(v) \cdot \sum_{i \in W} \phi_i(v_i)}
    \]
    and also that $M_W(\ctr)$ is independent of the current values $v$ and so $\expect{\ctr' \sim P_{W}(\cdot | \ctr)}{V^*(\ctr')} \cdot \expect{v}{x_W^a(v)} = \expect{v}{x_W^a(v) \cdot M_W(\ctr)}$.
    In the second equality, we used that
    \[
        x_\emptyset^a(v) = 1-\sum_{\substack{W \subseteq [n] \\ 0 < |W| \leq k}} x_W^a(v).
    \]
    Finally, observe that we can optimize Eq.~\eqref{eq:opt_final} by finding a set $W$ with $0 < |W| \leq k$ that maximizes $\ctr \cdot \sum_{i \in W} \phi_i(v_i) + M_W(\ctr) - M_{\emptyset}(\ctr)$ if this quantity is positive.
    Otherwise, the mechanism does allocate to any advertiser.
\end{proof}

\section{Omitted Proofs from Section \ref{sec:learning approximately optimal policies}}\label{apx:approximately optimal policies}

\begin{proof}[Proof of \Cref{lem:number of samples to construct a good empirical MDP}]
     Let $V^*$ be the optimal value function of the true MDP.
     Notice that $V^*$ is independent of the randomness
     used in the estimation of $\wh P$. We construct this MDP in the 
     following way: for every $(\ctr, W, \ctr') \in \S \times \W \times \S$, we set $\wh P_W(\ctr'|\ctr) = \mathrm{count}(\ctr'|\ctr,W)/N$,
     where $\mathrm{count}(\ctr'|\ctr,W)$ is the number of transitions
     we observe in the samples from $\ctr$ to $\ctr'$ when the outcome
     is $W.$
     Thus, 
     by using a Hoeffding bound and
     taking a union bound over $\S \times \W$ we see that with
     probability at least $1-\delta$ it holds that for every $(\ctr,W) \in \S \times \W$ 
     \[
        |\langle P_W( \cdot| \ctr) - \wh P_W(\cdot |\ctr), V^* \rangle| \leq  \frac{1}{1-\gamma}\cdot \sqrt{\frac{2 \log (2|\S||\W|/\delta)}{N}} \,.
     \]
     Thus, with probability $1-\delta$ we have that 
     for every state $\ctr \in \S$ and auction $a \in \A$ it holds
     \begin{align*}
         \left| \left \langle \Ex_{W \sim \D^a} [P_W( \cdot| \ctr)] - \Ex_{W \sim \D^a}[\wh P_W(\cdot |\ctr)], V^*  \right\rangle\right| &=
         \left|\Ex_{W \sim \D^a} \left[\left \langle P_W( \cdot| \ctr) - \wh P_W( \cdot| \ctr) , V^* \right \rangle\right] \right| \\
         &\leq \Ex_{W \sim \D^a}\left[\left| \left \langle P_w( \cdot| \ctr) - \wh P_W( \cdot| \ctr) , V^* \right \rangle\right| \right] \\
         &\leq \frac{1}{1-\gamma}\cdot \sqrt{\frac{2 \log (2|\S||\W|/\delta)}{N}} \,.
     \end{align*}
     For the rest of the proof we condition on the previous event.

     Let $\pi$ be some fixed (deterministic) policy.
     For every $(\ctr,a) \in \S \times \A$ we define the operator $P^\pi$ as follows
     \begin{align*}
         P^\pi_{(\ctr,a),(\ctr',a')} &= \Ex_{W \sim \D^a}[P_W(\ctr'|\ctr)] & \text{ if } a' = \pi(\ctr') \\
         P^\pi_{(\ctr,a),(\ctr',a')} &= 0 & \text{ otherwise } \,.
     \end{align*}
     Similarly, we define the operator $\wh P^{\pi}$
     \begin{align*}
         \wh P^\pi_{(\ctr,a),(\ctr',a')} &= \Ex_{W \sim \D^a}[\wh P_W(\ctr'|\ctr)] & \text{ if } a' = \pi(\ctr') \\
         \wh P^\pi_{(\ctr,a),(\ctr',a')} &= 0 & \text{ otherwise } \,.
     \end{align*}
     Moreover, recall that
     $P_{(\ctr,a),\ctr'} = \Ex_{W \sim \D^a}[\wh P_W(\ctr'|\ctr)], \forall (\ctr,a) \in \S\times \A, \ctr' \in \S.$ Similarly for
     $\wh P_{(\ctr,a),\ctr'}.$

     We start with the proof of the second inequality.
     Let $\T$ be the Bellman update rule w.r.t. the true MDP and
     $\wh \T$ be the Bellman update rule w.r.t. the empirical MDP, i.e.,
     \begin{align*}
         \T(V)(\ctr) &= \max_{a \in \A} \left\{ R(\ctr,a) + \langle \Ex_{W \sim \D^a}[P_W(\cdot|\ctr)] , V \rangle\right\} \\
         \T(Q)(\ctr,a) &=  R(\ctr,a) + \sum_{\ctr' \in \S}
         \Ex_{W \sim \D^a}[P_W(\ctr'|\ctr)] \cdot \max_{a \in \A} Q(\ctr,a) \,,
     \end{align*}
     and similarly for $\wh \T.$ For the optimal $Q$-functions 
     $Q^*, \wh Q^*$ of the true and the empirical MDP we have
     that
     \begin{align*}
         ||Q^* - \wh Q^*||_\infty &= ||\T Q^* - \wh \T \wh Q^*||_\infty\\
         &\leq ||\T Q^* - R - \wh P^{\pi^*}Q^*||_\infty + ||R + \wh P^{\pi^*}Q^* - \wh \T \wh Q^*||_\infty\\
         &=  \gamma || P^{\pi^*}Q^* - \wh P^{\pi^*}Q^*||_\infty + \gamma||\wh P^{\pi^*}Q^* - \wh P^{\wh \pi^*} \wh Q^*||_\infty \\
         &= \gamma || (P - \wh P)V^*||_\infty + \gamma||\wh P(V^* - \wh V^*)||_\infty \\
         &\leq \gamma || (P - \wh P)V^*||_\infty + \gamma||V^* - \wh V^*||_\infty \\
         &\leq \gamma || (P - \wh P)V^*||_\infty + \gamma||Q^* - \wh Q^*||_\infty \,,
     \end{align*}
     so solving for $||Q^* - \wh Q^*||_\infty$ gives
     \begin{align*}
         ||Q^* - \wh Q^*||_\infty &\leq \frac{\gamma}{1-\gamma} || (P - \wh P)V^*||_\infty \\
         &\leq \frac{\gamma}{(1-\gamma)^2} \cdot \sqrt{\frac{2 \log (2|\S||\W|/\delta)}{N}} \,.
     \end{align*}
     Finally, to get the desired inequality notice that
     \[
        ||V^* - \wh V^*||_\infty \leq ||Q^* - \wh Q^{*}||_\infty \,.
     \]
     
    We now shift our attention to the first inequality.
     \begin{align*}
         Q^\pi(\ctr,a) &= R(\ctr,a) + \gamma \langle \Ex_{W \sim \D^a}[P_W(\cdot|\ctr)], V^\pi\rangle &\implies \\
         Q^\pi &= R + \gamma P^\pi Q^\pi &\implies \\
         Q^\pi &= (I -\gamma P^\pi)^{-1} R \,,
     \end{align*}
where the last step follows by the fact that
$(I -\gamma P^\pi)$ is a bounded linear 
operator on the vector space $\S \times \A$
with respect to the $\sup$-norm and the Neumann
series converges in the operator norm, so the
inverse of the operator exists and is given by
the formula $(I -\gamma P^\pi)^{-1} = \sum_{\gamma = 0}^\infty \gamma^i (P^\pi)^i.$ 
     Similarly, we have
     \[
        \wh Q^\pi = (I -\gamma \wh P^\pi)^{-1} R \,.
     \]

     Equipped with the above notation, we can write
     \begin{align*}
         Q^\pi - \wh Q^\pi &= (I - \gamma P^\pi)^{-1}R - (I - \gamma \wh P^\pi)^{-1}R \\
                             &= (I - \gamma  P^\pi)^{-1} \left((I - \gamma \wh P^\pi) - (I - \gamma  P^\pi)\right) \wh Q^\pi \\
                             &= \gamma(I - \gamma  P^\pi)^{-1}(P^\pi - \wh P^\pi)\wh Q^\pi \\
                             &=  \gamma(I - \gamma  P^\pi)^{-1}(P - \wh P) \wh V^\pi \,.
     \end{align*}
     We can write
     \[
        (I - \gamma  P^\pi)^{-1} = \sum_{\gamma = 0} \gamma^i (P^\pi)^i \,,
     \]
     where $||( P^\pi)^i||_\infty = 1$. Thus, we have
     \begin{align*}
         ||\gamma(I - \gamma  P^\pi)^{-1}(P - \wh P) \wh V^\pi||_\infty &\leq \gamma \sum_{i=0}^\infty ||\gamma^i (P^\pi)^i(P - \wh P) \wh V^\pi||_\infty\\
         &\leq \gamma \sum_{i=0}^\infty \gamma^i|| ( P - \wh P) \wh V^\pi||_\infty\\
         &\leq \frac{\gamma}{(1-\gamma)} || ( P - \wh P) \wh V^\pi||_\infty ,.
     \end{align*}
We now focus on the term $|| ( P - \wh P) \wh V^\pi||_\infty.$
We have that
\begin{align*}
    || ( P - \wh P) \wh V^\pi||_\infty &\leq \\
    || ( P - \wh P) V^*||_\infty + || ( P - \wh P)( \wh V^\pi - V^*)||_\infty &\leq \\
    || ( P - \wh P) V^*||_\infty + 2||\wh V^\pi - V^*||_\infty &\leq \\
    || ( P - \wh P) V^*||_\infty + 2||\wh V^\pi - \wh V^*||_\infty
    +2||\wh V^* - V^*||_\infty &\leq \\
    || ( P - \wh P) V^*||_\infty + 2||\wh V^\pi - \wh V^*||_\infty
    +2||\wh Q^* - Q^*||_\infty &\leq \\
\frac{1}{1-\gamma}\cdot \sqrt{\frac{2 \log (2|\S||\W|/\delta)}{N}} + 2 \eps_{\mathrm{opt}} +\\
2\frac{\gamma}{(1-\gamma)^2} \cdot \sqrt{\frac{2 \log (2|\S||\W|/\delta)}{N}} &\leq \\
2\eps_{\mathrm{opt}} + 3\frac{\gamma}{(1-\gamma)^2} \cdot \sqrt{\frac{2 \log (2|\S||\W|/\delta)}{N}} \,,
\end{align*}
where $\eps_{\mathrm{opt}} = ||\wh V^\pi - \wh V^*||_\infty.$
Combining the above inequalities, we get that
\[
    ||Q^\pi - \wh Q^{ \pi}||_\infty \leq \frac{2\gamma\eps_{\mathrm{opt}}}{1-\gamma} + 3\frac{\gamma^2}{(1-\gamma)^3} \cdot \sqrt{\frac{2 \log (2|\S||\W|/\delta)}{N}} \,.
\]
Finally, applying the inequality
\[
    ||V^\pi - \wh V^{ \pi}||_\infty \leq ||Q^\pi - \wh Q^{ \pi}||_\infty \,,
\]
gives the desired result.
\end{proof}

\begin{proof}[Proof of \Cref{thm:q estimation to policy estimation}]
    Let 
    \begin{align*}
        L_{V^{\wt \pi}}(\ctr) &= V^*(\ctr) - V^{\wt \pi}(\ctr) \\
        \ctr^* &= \argmax_{\ctr \in \S} L_{V^{\wt \pi}}(\ctr) \,.
    \end{align*}
    Consider the state $\ctr^* \in \S$ and let $a = \pi^*(\ctr^*), b = \wt \pi(\ctr^*).$ By 
    definition, we have that
    \begin{align*}
         R(\ctr^*,b) + \gamma \sum_{\ctr'} \Ex_{W \sim \D^{b}} [P_W(\ctr'|\ctr^*)] \cdot \wt V(\ctr') &\geq\\
         R(\ctr^*,a) + \gamma \sum_{\ctr'} \Ex_{W \sim \D^{a}} [P_W(\ctr'|\ctr^*)] \cdot \wt V(\ctr') - \eps' &\implies\\
          R(\ctr^*,b) + \gamma \sum_{\ctr'} \Ex_{W \sim \D^{b}} [P_W(\ctr'|\ctr^*)] \cdot (V^*(\ctr') + \eps_{\mathrm{opt}})&\geq\\
         R(\ctr^*,a) + \gamma \sum_{\ctr'} \Ex_{W \sim \D^{a}} [P_W(\ctr'|\ctr^*)] \cdot  (V^*(\ctr') - \eps_{\mathrm{opt}}) - \eps' &\implies\\
         R(\ctr^*,b) + \gamma \sum_{\ctr'} \Ex_{W \sim \D^{b}} [P_W(\ctr'|\ctr^*)] \cdot V^*(\ctr') &\geq\\
         R(\ctr^*,a) + \gamma \sum_{\ctr'} \Ex_{W \sim \D^{a}} [P_W(\ctr'|\ctr^*)] \cdot  V^*(\ctr') - 2\gamma\eps_{\mathrm{opt}} - \eps' &\implies\\
          R(\ctr^*,a) -  R(\ctr^*,b) &\leq \\
          2\gamma\eps_{\mathrm{opt}} + \eps' +
          \gamma \sum_{\ctr' \in \S}(\Ex_{W \sim \D^b}[P_W(\ctr'|\ctr^*)]  - &\\
          \Ex_{W \sim \D^a}[P_W(\ctr'|\ctr^*)] ) V^*(\ctr') &\,,
    \end{align*}
    where the first derivation follows from the property of $\wt V$
    and the remaining ones by rearranging the terms.
    By definition, we have that
    \begin{align*}
        L_{V^{\wt \pi}}(\ctr^*) &= V^*(\ctr^*) - V^{\wt \pi}(\ctr^*) \\
        &= R(\ctr^*,a) - R(\ctr^*,b) + \\
        &\gamma\sum_{\ctr' \in \S} (\Ex_{W \sim \D^a}[P_W(\ctr'|\ctr^*)] V^*(\ctr') - \\
        &\Ex_{W \sim \D^b}[P_W(\ctr'|\ctr^*)] V^{\wt \pi}(\ctr'))
    \end{align*}
    Substituting the previous inequality to this one we get
    \begin{align*}
        &L_{V^{\wt \pi}}(\ctr^*) \leq  2\gamma\eps_{\mathrm{opt}} + \eps' 
        +\gamma \sum_{\ctr' \in \S} (\Ex_{W \sim \D^b}[P_W(\ctr'|\ctr^*)]V^*(\ctr') -\\
        &\Ex_{W \sim \D^a}[P_W(\ctr'|\ctr^*)]V^*(\ctr') + \Ex_{W \sim \D^a}[P_W(\ctr'|\ctr^*)]V^*(\ctr') -\\
&\Ex_{W \sim \D^b}[P_W(\ctr'|\ctr^*)] V^{\wt \pi}(\ctr'))  \implies \\
&L_{V^{\wt \pi}}(\ctr^*) \leq  2\gamma\eps_{\mathrm{opt}} + \eps' + \\
&\gamma \sum_{\ctr' \in \S} \Ex_{W \sim \D^b}[P_W(\ctr'|\ctr^*)](V^*(\ctr') - V^{\wt\pi}(\ctr')) \,.
    \end{align*}
    By definition, $V^*(\ctr') - V^{\wt\pi}(\ctr') \leq L_{V^{\wt \pi}}(\ctr^*)$.
    Thus, we get
    \begin{align*}
        L_{V^{\wt \pi}}(\ctr^*) \leq  2\gamma\eps_{\mathrm{opt}} + \eps' + \gamma L_{V^{\wt \pi}}(\ctr^*) \,,
    \end{align*}
    so the result follows by solving for $L_{V^{\wt \pi}}(\ctr^*).$
\end{proof}

\begin{proof}[Proof of \Cref{thm:approximate bellman update}]
Notice that
\begin{align*}
   ||V^{(k)} - V^*||_\infty &\leq ||\T V^{(k-1)} - \T V^*||_\infty + ||\varepsilon_{k-1}||_\infty \\
   &\leq \gamma|| V^{(k-1)} - V^*||_\infty + \varepsilon \,. 
\end{align*}
    
Thus, applying the inequality recursively
we see that
\begin{align*}
   ||V^{(k)} - V^*||_\infty &\leq \gamma^k ||V^*||_\infty + \sum_{k=1}^K\gamma^{k-i}\varepsilon \\
   &\leq  \gamma^k||V^*||_\infty + \frac{\varepsilon}{1-\gamma} \,. 
\end{align*}

In order to prove the desired bound
we use \Cref{thm:q estimation to policy estimation} and we get
\begin{align*}
        ||V^{\pi_k} - V^*||_\infty &\leq \frac{2\gamma||V^{(k)} - V^*||_\infty + \eps'}{1-\gamma} \\
        &\leq \frac{2\gamma}{1-\gamma} \cdot \left(\gamma^k ||V^*||_\infty + \frac{\varepsilon}{1-\gamma} + \frac{\eps'}{2\gamma} \right) \\
        &\leq \frac{2\gamma}{(1-\gamma)^2} \cdot \left(\gamma^k  + \varepsilon +\frac{(1-\gamma)\eps'}{2\gamma}\right) \,.
\end{align*}
\end{proof}

\begin{proof}[Proof of \Cref{thm:gen moden-known valuation-uknown transition}]
     From \Cref{lem:number of samples to construct a good empirical MDP} we know that constructing
     the empirical MDP
     using $N = O\left(\frac{\log(|\S|\cdot |\W|/\delta)}{(1-\gamma)^6\eps^2}\right)$ samples from every state-output pair 
     we have that, with probability at 
     least $1-\delta$,
     $||V^* - \wh V^*||_\infty \leq \varepsilon/3$ and $||V^{\pi} - \wh V^{\pi}|| \leq 2\gamma\eps_{\mathrm{opt}}/(1-\gamma) + \eps/6$, for any policy $\pi$,
     where $\eps_{\mathrm{opt}} = ||\wh V^{\pi} - \wh V^*||_\infty$.
     In the empirical MDP $\wh P$ we can run exact value iteration (i.e., solve
     with $\eps_k = 0$)
     for $O\left(  \frac{\ln(1/(\varepsilon(1-\gamma)))} {1-\gamma}\right)$ iterations and 
     using the guarantees of \Cref{thm:approximate bellman update}
      we estimate a policy $\pi_k$ such
      that $||\wh V^{\pi_k} - \wh V^*||_\infty \leq (1-\gamma)\varepsilon/12.$ Thus, for
      this policy we have that $||\wh V^{\pi_k} - V^{\pi_k}||_\infty \leq \varepsilon/3.$ Combining these
      three inequalities, we have 
      that 
      \[
          ||V^{\pi_k} - V^*||_\infty \leq ||V^{\pi_k} - \wh V^{\pi_k}||_\infty + ||\wh V^{\pi_k} - \wh V^*||_\infty + ||\wh V^* - V^*||_\infty
          \leq \varepsilon. \qedhere
    \]
 \end{proof}

\begin{lemma}[\cite{devanur2016sample}]\label{lem:concentration of functions on empirical product distribution}
    Let $f: \X^n \rightarrow [0,1]$, where $\X$ is some measurable set. Let $\{\F_i\}_{i\in[n]}$
    be distributions on $\X$, and let $\F = \F_1 \times \ldots \times \F_n.$ Suppose
    $\mu = \Ex_{(v_1,\ldots,v_n)\sim\F}[f(v_1,\ldots,v_n)].$ Let $v_{i_1},\ldots,v_{i_m}$
    be $m$ i.i.d. samples from $F_i.$ Let $U_i$ be the uniform distribution over
    $\{v_{i_1},\ldots,v_{i_m}\}$ and $U = U_1 \times \ldots \times U_n$. Then, we have
    \begin{align*}
        \Pr_{v_{i_j} \sim F_i, \forall i \in [n], j \in [m]}\left[\left|\Ex_{(\wh v_1,\ldots,\wh v_n) \sim U}f(\wh v_1,\ldots,\wh v_n) - \mu\right| \geq 2\eps \right]
        \leq 2e^{\frac{-2m\eps^2}{4\mu + \eps}-\ln(\eps)}.
    \end{align*}
\end{lemma}

\begin{proof}[Proof of \Cref{lem:additive loss
after disicretization}]
    In order to prove the result we show that given the optimal mechanism $M$ 
    for $\F$ we can construct a mechanism $\wh M$ that achieves revenue on $\wh \F$ 
    that is at most $k\cdot\eps$ worse. From \Cref{thm:opt_mdp_auction_multi} we know 
    that for every input $v_1,\ldots,v_n$,
    $M$ allocates the slots to a set of bidders $W, 0\leq |W| \leq k$,
    that maximize
    \begin{align*}
        \sum_{i \in W} \phi_i(v_i) + \gamma \left( \expect{\ctr' \sim P_{W}(\cdot | \ctr)}{V^*(\ctr')} - \expect{\ctr' \sim P_{\emptyset}(\cdot | \ctr)}{V^*(\ctr')} \right) \,,
    \end{align*}
    where $\phi_i$ is the (ironed) virtual value of bidder $i$.
    We assume without loss of generality 
    that $M$ breaks ties deterministically among sets of bidders that maximize this quantity.
    We define the allocation rule of $\wh M$ for the distribution $\wh \F$ as follows.
    \begin{itemize}
        \item Let $\wh v = (\wh v_1,\ldots,\wh v_n) \in \{0,\eps,\ldots,h\}^n$ be the reported
        value profile.
        \item For every bidder $i \in [n]$ sample $v_i$ from $F_i$ conditioned on $\wh v_i \leq v_i < \wh v_i + \eps.$
        \item Use the allocation rule of $M$ on $(v_1,\ldots,v_n).$
    \end{itemize}

    We first show that there is a payment rule that makes $\wh M$ truthful. Suppose that
    we fix the (internal) randomness used in the second step. Then, for every bidder $i \in [n]$, 
    holding the bids of the rest fixed, the allocation rule is a step function. Therefore, 
    we can make the rule IC by charging the winners the minimum bid they would 
    need to submit to be part of the winning set. Since the mechanism is IC for every 
    realization of its random bits, it is IC overall.

    The next step is to compare the expected modified revenue of $\wh M$ w.r.t. $\wh \F$ to the
    expected modified revenue of $M$ w.r.t. $\F.$ In order to do so, we couple two random variables,
    the modified revenue of $M$ over a sample $v = (v_1,\ldots,v_n) \sim \F$ and the modified
    revenue of $\wh M$ over a sample $\wh v = (\wh v_1,\ldots,\wh v_n) \sim \wh \F.$ 
    To be more precise,
    we will couple valuation profiles $v, \wh v$ so that $v$ is the output of the random
    redraw (i.e., step 2 of $\wh M.$). For each such pair, by definition of $\wh M$,
    the allocation is the same as in $M.$ Therefore, the contribution to the modified
    revenue of the term $\gamma \left( \expect{\ctr' \sim P_{W}(\cdot | \ctr)}{V^*(\ctr')} - \expect{\ctr' \sim P_{\emptyset}(\cdot | \ctr)}{V^*(\ctr')} \right)$ 
    is the same across the two mechanisms. Let us now focus on the payments. Suppose
    that $W^*$ is the winning set. By definition, the winning threshold
    of every $i \in W^*$ in $\wh M$ is obtained by rounding down the winning threshold
    of $i$ in $M$ to closest multiple of $\eps.$ Thus, the payments of every bidder are at most
    $\eps$ worse. Hence, we see that the expected modified revenue of $\wh M$ is at most $k\cdot \eps$
    worse than the expected modified revenue of $M.$
\end{proof}

\begin{proof}[Proof of \Cref{lem:opt-threshold-auction-close-to-opt}]
    Let $M^*$ be the optimal mechanism, $v=(v_1,\ldots,v_n)$ be some valuation profile and
    let $W^*$ be the set of bidders that $\M^*$ allocates the slots to under input $v.$
    Consider threshold auction that rounds down the ironed virtual value $\phi_i(v_i)$ of each
    bidder to the closest multiple of $\eps/k$ in $ \{-\infty\} \cup 
        \left\{\frac{-2k}{1-\gamma},\frac{-2k}{1-\gamma}+\eps/k,\ldots,1 \right\}.$
    Let $\wt \phi_i(v_i)$ denote the rounded virtual value. It is not
    hard to see that if $\wt \phi_i(v_i) = -\infty$ then $i \notin W^*,$ since $\phi_i(v_i) < -2k/(1-\gamma)$. 
    Assume towards contradiction that $i \in W^*$. Then $\sum_{j \in W^* \setminus \{i\}} \phi_j(v_j) \leq k$
    and 
    \[
    \gamma \left( \expect{\ctr' \sim P_{W}(\cdot | \ctr)}{V(\ctr')} - \expect{\ctr' \sim P_{\emptyset}(\cdot | \ctr)}{V(\ctr')} \right) \leq \frac{k}{1-\gamma} \,,
    \]
    so the modified virtual value of $W^*$ is negative. 
    Let $\wh W^*$ be the set of bidders selected by $M$. Then, we have that
    \begin{align*}
         &\ctr  \cdot \sum_{i \in \wh W^*} \wt \phi_i(v_i) \\
         &+
         \gamma \left( \expect{\ctr' \sim P_{\wh W^*}(\cdot | \ctr)}{V(\ctr')} - \expect{\ctr' \sim P_{\emptyset}(\cdot | \ctr)}{V(\ctr')} \right) \geq \\
          &\ctr  \cdot \sum_{i \in W^* } \wt \phi_i(v_i) \\
          &+ \gamma \left( \expect{\ctr' \sim P_{W^*}(\cdot | \ctr)}{V(\ctr')} - \expect{\ctr' \sim P_{\emptyset}(\cdot | \ctr)}{V(\ctr')} \right) \geq \\
           &\ctr  \cdot \sum_{i \in W^* } \phi_i(v_i) \\
          &+ \gamma \left( \expect{\ctr' \sim P_{W^*}(\cdot | \ctr)}{V(\ctr')} - \expect{\ctr' \sim P_{\emptyset}(\cdot | \ctr)}{V(\ctr')} \right) - k\cdot (\eps/k) \,.
    \end{align*}
    Thus, for every bid profile $v$ the modified virtual welfare of $M$ is at most $\eps$ worse
    than the modified virtual welfare of the optimal mechanism. The result follows 
    by integrating over all the valuation profiles.
\end{proof}

\begin{proof}[Proof of \Cref{lem:uniform convergence of modified revenue of threshold auctions for discrete distributions}]
    Based on \Cref{lem:opt-threshold-auction-close-to-opt} we know that there is
    a threshold auction at resolution $\eps/(2k)$ whose modified revenue is at most $\eps/2$
    worse than the optimal one.
    Since any tie-breaking rule gives the same revenue in expectation, we can assume w.l.o.g.
    that it is deterministic. For every bidder $i\in [n]$ and every valuation $v_{ij} \in B$
    such threshold auctions have $\frac{k(4k+1-\gamma)}{\eps(1-\gamma)} + 1$ possible mappings 
    for $v_{ij}$. Thus, since every threshold auction can be described by such a mapping
    there are at most 
    \[
        L \leq \left(\frac{k(4k+1-\gamma)}{\eps(1-\gamma)} + 1 \right)^{n|B|}  \,,
    \]
    different such mappings. For every such mapping $\sigma$ a mechanism $M_\sigma$\footnote{In order for this to be a truthful mechanism we need the mapping to be monotone, but we only need an upper bound on the number of such mechanisms.} treats $\sigma$ as the virtual value function 
     and then allocates the slots to the set of at most $k$ bidders that maximize the modified
     virtual welfare. Since the values are bounded by 1 and there are $k$ slots,
     we know that the modified
    revenue is at most $k/(1-\gamma).$ Let $f_\sigma$ be the modified revenue of 
    $M_\sigma$. Then, we use \Cref{lem:concentration of functions on empirical product distribution} on every $f_\sigma$
    by scaling them appropriately so that
    the modified revenue is in $[0,1]$, and we set the approximation parameter to be
    $\wt \eps = (1-\gamma)\eps/(4k^2)$ by taking
    \[
        m = \wt O\left(  \frac{n\cdot|B|\cdot k^4}{\eps^2(1-\gamma)^2}\log(1/\delta)\right) \,.
    \]
    Let $\wt \mu$ be an upper bound on the expected modified revenue of $M_\sigma$. We denote by $\wt{\mathrm{Rev}}(M_\sigma, \F)$ the modified revenue of $M_\sigma$ with respect to
    $\F$ and by $\wt{\mathrm{Rev}}(M_\sigma, \wh \F)$ the modified revenue of $M_\sigma$ with respect to
    $\wh \F$
    Hence, for each such function we have
    \begin{align*}
        \Pr[|\wt{\mathrm{Rev}}(M_\sigma, \F) - \wt{\mathrm{Rev}}(M_\sigma, \wh \F)| \geq \eps/(4k)] 
        \leq 2e^{\frac{2m\wt{\eps}^2}{4\wt \mu + \wt{\eps}}-\ln(\wt{\eps})}
        \leq \delta/L \,.
    \end{align*}
    Thus, by taking a union bound over all $M_\sigma$ we can see that with
    probability at least $1-\delta$ for any $\sigma$ it holds that
    \[
        |\wt{\mathrm{Rev}}(M_\sigma, \F) - \wt{\mathrm{Rev}}(M_\sigma, \wh \F)| \leq \eps/(4k) \,.
    \]
    Notice that it is not computationally efficient to output the best threshold mechanism
    with respect to the empirical distribution. Thus, we take the following approach.
    First, we compute the (ironed) virtual values $\wt \phi_i$ w.r.t. the empirical distribution
    $\wt \F, \forall i \in [n].$ This can be done in polynomial time in the number of samples $m$
    using the result of \cite{elkind2007designing}.
    Then, we output the mechanism $\wh M$ which, for every input
    profile $v$ rounds the virtual values (w.r.t. $\wt \F$) to the closest multiple of $\eps/(4k)$
    and selects the set $W$ that maximizes the modified virtual welfare. This mechanism has 
    modified revenue w.r.t. the empirical distribution $\wh \F$
    at most $\eps/4$ worse than the optimal w.r.t. $\wh \F$.
    Thus, it also has modified revenue
    at most $\eps/4$ worse than the optimal threshold mechanism w.r.t. $\wh \F.$
    Thus, the modified revenue of $\wh \M$ w.r.t. $\F$ is at most $\eps/2$ worse than that 
    of the optimal threshold mechanism w.r.t. the true distribution
    $\F$, which in turn is at most $\eps/2$ worse
    than the optimal modified revenue w.r.t. $\F.$ Hence, we see the modified
    revenue of $\wh M$ w.r.t. $\F$ is at most $\eps$ worse than the optimal.
\end{proof}


\begin{algorithm}[ht!]
\floatname{algorithm}{Algorithm}
\caption{Approximately Optimal Modified Revenue Mechanism from Samples}\label{alg:eps-optimal modified revenue mechanism from samples with thresholds}
\begin{algorithmic}[1]
\STATE \textbf{Input}: $m$ i.i.d. samples from $\F = \F_1 \times \ldots \times \F_n$,
where each $\F_i$ is supported on $[0,1]$, discount factor $\gamma$, number of slots $k$,
value
function $V : \S \rightarrow [0,k/(1-\gamma)]$, transition kernel of the MDP $P$, approximation parameter $\eps.$
\STATE \textbf{Output}: A truthful mechanism.
\STATE Round all the samples down to the closest multiple of $\eps/(2k).$
\STATE Let $\wh \F = \wh \F_1 \times \ldots \times \wh \F_n$ be the empirical
distribution on the rounded $m$ samples
\STATE $\wh \phi_i \gets$ ironed virtual value of bidder $i$ w.r.t. $\wh F, \forall i \in [n]$
\STATE $M \gets $ output of \Cref{alg:rounded virtual value mechanism} with inputs $\{\wh \phi_i\}_{i \in [n]}, V, \gamma, P,$ and resolution $\beta = \eps/(4k).$
\STATE $\wh M \gets$ the mechanism that rounds down all of
its inputs to multiples of $\eps/(2k)$ and then runs $M$
\STATE $\wh \opt(\F) \gets $ revenue of $\wh M$ w.r.t. $\wh \F.$
\STATE Return $\wh M, \wh \opt(\F).$
\end{algorithmic}
\end{algorithm}

\begin{algorithm}[ht!]
\floatname{algorithm}{Algorithm}
\caption{Rounded Virtual Value Mechanism}\label{alg:rounded virtual value mechanism}
\begin{algorithmic}[1]
\STATE \textbf{Input}: Ironed virtual value functions $\{\wh \phi_i\}_{i \in [n]},$ discount factor $\gamma$, number of slots $k$,
value
function $V : \S \rightarrow [0,k/(1-\gamma)]$, transition kernel of the MDP $P$, resolution parameter $\beta$, valuation profile $v = (v_1,\ldots,v_n).$
\STATE \textbf{Output}: Allocation of the slots.
\STATE $\wh v_i \gets $ closest rounded down value of $\wh \phi_i(v_i)$ in $ \{-\infty\} \cup 
        \left\{\frac{-2k}{1-\gamma},\frac{-2k}{1-\gamma}+\beta,\ldots,1 \right\}.$
\STATE $R \gets 
        \max_{0 < |W| \leq k}\ctr  \cdot \sum_{i \in W} \wh v_i + 
        \gamma (\expect{\ctr' \sim P_{W}(\cdot | \ctr)}{V(\ctr')} - \expect{\ctr' \sim P_{\emptyset}(\cdot | \ctr)}{V(\ctr')}).$
\STATE $W^* \gets \argmax_{0 < |W| \leq k}\ctr  \cdot \sum_{i \in W} \wh v_i + 
        \gamma (\expect{\ctr' \sim P_{W}(\cdot | \ctr)}{V(\ctr')} - \expect{\ctr' \sim P_{\emptyset}(\cdot | \ctr)}{V(\ctr')}).$
\IF{$R < 0$} 
\STATE Return $\emptyset.$
\ELSE
\STATE Return $W^*.$
\ENDIF
\end{algorithmic}
\end{algorithm}

\begin{theorem}\label{lem:uniform convergence of modified revenue of threshold auctions for continuous distributions}
    Given any product value distribution $\F = \F_1 \times \ldots \times \F_n$ where
    every $\F_i$ is supported on $[0,1]$, and any $\eps > 0, \delta > 0$, \Cref{alg:eps-optimal modified revenue mechanism from samples with thresholds} takes as input $m$ i.i.d.
    samples from $\F$ and outputs an auction $M$ that achieves 
    modified revenue
    at least $\wt \opt(\F) - \eps$, with probability at least $1-\delta$, whenever
    \[
        m = \wt O\left(  \frac{n\cdot k^5}{\eps^3(1-\gamma)^2}\log(1/\delta)\right) \,,
    \]
    as well as an estimate $\wh \opt(\F)$ with the property $|\wh \opt(\F) - \wt \opt(\F)|\leq \eps.$
    Moreover, the running time of the algorithm to output $M$ is $\mathrm{poly}(m)$
    and the running time to output the estimate of the modified revenue is $\mathrm{poly}(m, n^k).$
\end{theorem}

\begin{proof}[Proof of \Cref{lem:uniform convergence of modified revenue of threshold auctions for continuous distributions}]
    This is a corollary of \Cref{lem:additive loss
after disicretization} and \Cref{lem:uniform convergence of modified revenue of threshold auctions for discrete distributions}. By doing a discretization
in the valuation distribution with parameter $\eps/2k$ we
know that the modified revenue decreases by at most $\eps/2.$ Then, we can use
\Cref{lem:uniform convergence of modified revenue of threshold auctions for discrete distributions}
with parameter $\eps/2$, so overall the modified revenue loss is at most $\eps.$ In order
to output the estimate of the revenue, we need to run the mechanism on the empirical
distribution we have constructed, which takes time $\mathrm{poly}(m,n^k).$
\end{proof}

\begin{proof}[Proof of \Cref{thm:approximately-optimal-policy-estimator-unknown-MDP-unknown-F}]
    This result is, essentially, a corollary of various results
    we have shown so far. First, notice 
    that by taking $O\left(\frac{\log(|\S|\cdot |\W|/\delta)}{(1-\gamma)^6\eps^2}\right) $ samples
    from every state-outcome pair $(\ctr,W) \in \S \times \W$
    and constructing the empirical MDP $\wh P$ guarantees that (cf. \Cref{lem:number of samples to construct a good empirical MDP}),
    with probability at least $1-\delta/2$, for any policy $\pi$
    \begin{align*}
        ||V^\pi - \wh V^{\pi}||_\infty &\leq \frac{2\gamma\eps_{\mathrm{opt}}}{1-\gamma} + \frac{\eps}{6} \\
         ||V^* - \wh V^*||_\infty &\leq \frac{\eps}{3}\,,
    \end{align*}
    where $\eps_{\mathrm{opt}} = ||\wh V^\pi - \wh V^{*}||_\infty.$ We call
    this event $\E_0$ and we condition
    on it for the rest of the proof.
    Thus, it suffices to compute a policy $\pi$ for which $||\wh V^\pi - \wh V^*||_\infty = O((1-\gamma)\eps).$ To that end, \Cref{thm:approximate bellman update}
    shows that if run value iteration for $k$ rounds in the empirical MDP,
    estimate every update
    with accuracy $\eps'$, and define $\pi_k$ to be an $\eps'$-greedy policy
    with respect to $\wh V^{(k)}$, then $||\wh V^{\pi_k} - \wh V^*||_\infty \leq \frac{6\gamma}{(1-\gamma)^2} \cdot \left(\gamma^k  + \varepsilon' \right).$ Thus,
    by choosing $\eps' = O((1-\gamma)^3\eps), K = O\left(\frac{\log(1/((1-\gamma)\eps))}{1-\gamma}\right)$ we end up with  $||\wh V^{\pi_K} - \wh V^*||_\infty \leq (1-\gamma)\eps/12$. Hence, plugging in the guarantees 
    of \Cref{lem:uniform convergence of modified revenue of threshold auctions for continuous distributions} with accuracy
    $\eps' = O((1-\gamma)^3\eps)$, and confidence $\delta/(2k|\S|)$
    we see that that with probability at least $1-\delta/2$ all the calls will
    return an $\eps'$-optimal auction as well as their (expected) revenue with
    accuracy $\eps'.$ The number of samples from $\F$ needed for each iteration and
    each state is
    \[
         m = \wt O\left(  \frac{n\cdot k^5}{\eps^3(1-\gamma)^{11}}\log(|\S|/\delta)\right) \,.
    \]
    We call this event $\E_1$ and we condition on it 
    for the rest of the proof. The total number of samples we draw from $\F = \F_1 \times \ldots \times \F_n$
    is
    \begin{align*}
         O\left(K\cdot|\S| \cdot m\right) \,.
    \end{align*}
    The previous discussion shows that by choosing
    the constants appropriately, under $\E_0, \E_1$,
    the policy $\pi_K$ satisfies $||\wh V^{\pi_K} - V^*||_\infty \leq (1-\gamma)\eps/12$, which implies that $||V^{\pi_K} - \wh V^{\pi_K}||_\infty \leq \eps/3.$
    Thus, we have that
    \[
        ||V^* -  V^{\pi_K}||_\infty
        \leq ||V^* - \wh V^{*}||_\infty + ||\wh V^* - \wh V^{\pi_K}||_\infty + ||\wh V^{\pi_K} - V^{\pi_K}||_\infty \leq \eps \,. \qedhere
    \]
\end{proof}

\section{Omitted Details from Section~\ref{sec:simple-mechanism}}\label{apx:simple-mechanism}

\subsection{Description and Analysis of the Mechanism from Lemma~\ref{lem:two_stage_spa_guarantee}}
\label{subsec:two_stage_spa}
In this subsection, we discuss the mechanism described in Lemma~\ref{lem:two_stage_spa_guarantee} and prove the lemma.

First, we provide a high-level overview of Mechanism~\ref{alg:two_stage_spa} and explain some of the intricacies in its design.
Let $x^*$ refer to the optimal MDP mechanism described in Corollary~\ref{cor:opt_mdp_auction}.
\modify{We begin by ignoring the bidders in $S_2$ and first describe the allocation to bidders in $S_1$.
We will return to how we allocate to bidders in $S_2$ at the end  of our high-level discussion.}

\paragraph{\modify{Sketch of our proof of Lemma~\ref{lem:two_stage_spa_guarantee}.}} One can think of each of the bidders in $S_1$ as facing a reserve price set using their own modified virtual value and the highest modified virtual value from the bidders in $S_2$.
Note that if $|S_1| = 1$ then this turns out to be very straightforward.
In $x^*$, the sole bidder in $S_1$ is facing a threshold price from its own reserve and the competition from $S_2$.
So, in Mechanism~\ref{alg:two_stage_spa}, we simply set this as its reserve price.
This extracts the same revenue from all bidders in $S_1$ as $x^*$ does.

From now on, we assume that $|S_1| \geq 2$ and our goal now is to show that Mechanism~\ref{alg:two_stage_spa} can extract at least $(1-1/|S_1|) \geq 1/2$ fraction of the revenue from $S_1$ that $x^*$ extracts.
At this point, we note a technical issue that arises.
If the modified virtual value for a bidder $i$ is constructed by adding a \emph{positive} correction term, it may be that when advertiser $i$ wins, their (unmodified) virtual value is negative.
Constructing a simple auction using negative virtual value reserves is not something that needed to be addressed in prior works on simple one-shot auctions, and turns out to be slightly tricky.
\todo{Check if we want to say more about why $|S_1| = 1$ case is easy.}
To fix this, in the mechanism, we consider an auxiliary auction $x'$ which is identical to $x^*$ except (i) it does not allocate to bidders in $S_2$ and (ii) it sets an additional reserve to bidders in $S_1$ to ensure they only win when their (unmodified) virtual value is non-negative.
This also had the additional benefit that the revenue that $x'$ extracts from $S_1$ is no less than the revenue that $x^*$ extracts from $S_1$.
\modify{We remark that even though $x'$ extracts at least as much revenue from $S_1$ as $x^*$, it may not be that $x'$ is an optimal mechanism. The reason is that $x'$ allocates less often than $x^*$ and so the MDP transitions under $x'$ and $x^*$ are different.}

At this point, we could turn $x'$ into a simple auction for bidders in $S_1$ by running a second price auction with the reserves used in $x'$.
This would yield a simple auction which allocates to an advertiser in $S_1$ with exactly the same probability as $x'$ does and $2$-approximate the contribution from $S_1$.
The only issue is that $x'$  may allocate to a bidder in $S_1$ with probability \emph{less} than $x^*$.
So, we must somehow increase the probability that $x'$ allocates to a bidder.
We do this as follows. Recall that we assume $|S_1| \geq 2$.
From $S_1$, we remove the bidder that is contributing the least amount of revenue according to $x'$.
Since we remove the bidder that is contributing the least revenue, we only lose at most $1/2$ of the revenue.
Let $i^*$ denote the identity of the removed bidder.
Now observe that, if we reintroduce this bidder with a reserve of $0$ then we allocate to a bidder in $S_1$ with probability $1$.
Analogously, if we set an infinite reserve for $i^*$ then we allocate to a bidder in $S_1$ with probability no more than $x'$ did.
However, we can tune this reserve so that the probability, over the value distribution, with which we allocate to a bidder in $S_1$ exactly matches that of $x^*$.
Moreover, as we shall prove in Lemma~\ref{lem:generalized_spa_2_approx}, running a second price auction with these new reserves extracts at least $1/2$ revenue that $x'$ does from $S_1 \setminus \{i^*\}$ which in turn is at least $1/4$ of the revenue that $x'$ extracts from $S_1$.
We thus conclude that the second price auction with these new reserves $4$-approximate the revenue from $S_1$ (according to $x^*$).

The above describes the first stage of the two-stage second-price auction.
At this point, we have matched the probability of allocating to $S_1$ and have extracted good revenue from $S_1$.
It remains to match the probability of allocating to $S_2$.
For this, with a well-chosen probability, we run a second price auction with no reserves for the bidders in $S_2$.
This probability is chosen to guarantee that Mechanism~\ref{alg:two_stage_spa} allocates to some bidder in $S_2$ with the same probability as $x^*$.

\begin{algorithm}[ht!]
\floatname{algorithm}{Mechanism}
\caption{Two-stage second-price auction with $(S_1, S_2)$}\label{alg:two_stage_spa}
\begin{algorithmic}[1]
\STATE \textbf{Input}: Value distribution $\F_i$ for each bidder $i$, current state $\ctr$, disjoint sets $S_1, S_2$ such that $S_1 \cup S_2 = [n]$, optimal mechanism $x^*$ from \modify{Corollary~\ref{cor:opt_mdp_auction}}.
\STATE \textbf{Output}: A two-stage second-price auction with reserves that allocates to $S_1$ (resp.~$S_2$) with the same probability as the optimal mechanism from \modify{Corollary~\ref{cor:opt_mdp_auction}}. 
\STATE Let $R_{j} = \sum_{i \in S_{j}} \expect{v}{x_i^*(v) \phi_i(v_i)}$ for $j \in \{1, 2\}$.
\STATE Let $p_{j} = \sum_{i \in S_{j}} \expect{v}{x_i^*(v)}$ for $j \in \{1, 2\}$.
\STATE For each $i \in S_{2}$, sample $\tilde{v}_i \sim F_i$.
\STATE Let $\wphi = \max_{i \in S_{2}} \wphi_i(\tilde{v}_i, \ctr)$ and 
\[
    \Delta_i = \gamma\left(\expect{\ctr' \sim P_i(\cdot | \ctr)}{\Vs(\ctr')} - \expect{\ctr' \sim P_0(\cdot | \ctr)}{\Vs(\ctr')}\right).
\]
\IF{$|S_1| = 1$}
\STATE Let $i$ be the only bidder in $S_1$ and set $r_i = \phi_i^{-1}(\max\{-\Delta_i, \wphi - \Delta_i\} / \ctr)$ as its reserve price.
\STATE Allocate to bidder $i$ if they meet their reserve.
\ELSE
\STATE Let $r_i = \phi_i^{-1}(\max\{-\Delta_i, \wphi - \Delta_i, 0\} / \ctr)$.  \label{line:two_stage_reserves1}
\STATE Define the allocation $x_i'(v) = x_i^*(v) \cdot \1[i \in S_{1} \wedge \phi_i(v_i) \geq 0]$. \label{line:two_stage_new_alloc}
\STATE Let $i^* \in \argmin_{i \in S_{1}}\{ \expect{v}{x_i'(v) \phi_i(v_i)}\}$. \label{line:two_stage_i*}
\STATE Let $p_{1}' = \Pr[\exists i \in S_{1} \setminus \{i^*\}, v_i \geq r_i ]$.
\STATE Let $\rho_1 = 1 - (1-p_{1})/(1-p_{1}')$.
\STATE Redefine $r_{i^*} = F_{i^*}^{-1}(1-\rho_1)$. \label{line:redef_two_stage_reserve}
\STATE Run a second price auction with reserves $r_i$ for all advertisers $i \in S_{1}$. \label{line:two_stage_spa}
\ENDIF
\STATE If there is no winner so far then with probability $p_{2} / (1-p_{1})$ run a second price auction with no reserves for all advertisers $i \in S_{2}$. \label{line:two_stage_second_spa}
\end{algorithmic}
\end{algorithm}

We split the proof of Lemma~\ref{lem:two_stage_spa_guarantee} into parts.
Lemma~\ref{lem:prob_correct} shows that allocation of Mechanism~\ref{alg:two_stage_spa} matches that of $x^*$ and Lemma~\ref{lem:revenue} gives the revenue guarantee for Mechanism~\ref{alg:two_stage_spa}.
\begin{lemma}
    \label{lem:prob_correct}
    Let $S_1, S_2$ be disjoint sets such that $S_1 \cup S_2 = [n]$.
    Then Mechanism~\ref{alg:two_stage_spa}, given input $S_1, S_2$, outputs an ad in $S_1$ (resp.~$S_2$) with exactly the same probability as the optimal MDP mechanism described in Corollary~\ref{cor:opt_mdp_auction}.
\end{lemma}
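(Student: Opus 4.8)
The plan is to prove the two allocation guarantees separately: that Mechanism~\ref{alg:two_stage_spa} produces a winner in $S_1$ with probability exactly $p_1 = \sum_{i \in S_1}\expect{v}{x_i^*(v)}$, and then that it produces a winner in $S_2$ with probability exactly $p_2 = \sum_{i \in S_2}\expect{v}{x_i^*(v)}$. The second claim follows cleanly from the first: conditioned on no winner emerging from the first stage (an event of probability $1-p_1$ once the first claim is established), line~\ref{line:two_stage_second_spa} runs a reserve-free second-price auction over $S_2$ with probability $p_2/(1-p_1)$, and such an auction always produces a winner when $S_2 \neq \emptyset$, so the total probability of allocating to $S_2$ is $(1-p_1)\cdot p_2/(1-p_1) = p_2$. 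The only sanity check here is that $p_2/(1-p_1) \le 1$, i.e.\ $p_1+p_2 \le 1$, which holds because $x^*$ (Corollary~\ref{cor:opt_mdp_auction}) allocates to at most one bidder, so $p_1 + p_2 = \sum_{i\in[n]}\expect{v}{x_i^*(v)} \le 1$. The whole argument thus reduces to the first stage.

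For the first stage I would first dispatch $|S_1|=1$. Here the reserve $r_i = \phi_i^{-1}(\max\{-\Delta_i, \wphi - \Delta_i\}/\ctr)$ is exactly the condition $\wphi_i(v_i,\ctr) \ge \max\{0,\wphi\}$. Since the fresh samples $\tilde v_j \sim F_j$ make $\wphi = \max_{j\in S_2}\wphi_j(\tilde v_j,\ctr)$ identically distributed to, and independent of the $S_1$ value, the probability that $i$ clears its reserve equals the probability that $x^*$ allocates to $i$, namely $p_1$. The substantive case is $|S_1|\ge 2$, where line~\ref{line:two_stage_spa} runs a second-price auction with reserves over $S_1$ and produces a winner iff some bidder clears its reserve. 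Because $v_{i^*}$ is independent of the other $S_1$ values and of the samples defining $\wphi$, and $r_{i^*}=F_{i^*}^{-1}(1-\rho_1)$ is a fixed threshold, the no-winner event factors:
\[
    \Pr[\text{no winner in } S_1] = \Pr[\forall i \in S_1\setminus\{i^*\},\, v_i < r_i]\cdot \Pr[v_{i^*} < r_{i^*}] = (1-p_1')(1-\rho_1),
\]
using $\Pr[v_{i^*}\ge r_{i^*}] = \rho_1$ and the definition of $p_1'$. Substituting $\rho_1 = 1-(1-p_1)/(1-p_1')$ collapses this to $1-p_1$, so the first stage allocates to $S_1$ with probability exactly $p_1$.

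The main obstacle is legitimizing this substitution, i.e.\ showing $p_1' < 1$ and $\rho_1 \in [0,1]$, which both come down to the inequality $p_1' \le p_1$. I would prove this by a coupling/containment argument. Unwinding the reserve on line~\ref{line:two_stage_reserves1}, the event defining $p_1'$ is $\{\exists i \in S_1\setminus\{i^*\}: \wphi_i(v_i,\ctr) \ge \max\{0,\wphi\}\ \text{and}\ \phi_i(v_i)\ge 0\}$, whereas $x^*$ allocates to $S_1$ precisely on $\{\max_{i\in S_1}\wphi_i(v_i,\ctr) > \max\{0,\wphi^{S_2}\}\}$, with $\wphi^{S_2}=\max_{j\in S_2}\wphi_j(v_j,\ctr)$ the true competition. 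Coupling $\wphi$ with $\wphi^{S_2}$ (they are identically distributed and independent of the $S_1$ values), any realization in the former event has some $i\in S_1$ with $\wphi_i \ge \max\{0,\wphi\}$, hence $\max_{i'\in S_1}\wphi_{i'} \ge \max\{0,\wphi^{S_2}\}$; since the $\phi_i$ are continuous under regularity, the tie set is measure zero, so up to a null set the former event is contained in the latter, giving $p_1' \le p_1$.

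This yields $\rho_1 \in [0,1]$, so $r_{i^*}=F_{i^*}^{-1}(1-\rho_1)$ is well defined (the degenerate subcase $p_1'=p_1=1$, where $i^*$ is simply never admitted, is checked directly and still gives first-stage probability $p_1$). Combining the first-stage accounting with the second-stage reduction then establishes that the mechanism allocates to $S_1$ and to $S_2$ with probabilities $p_1$ and $p_2$, matching $x^*$ and proving the lemma.
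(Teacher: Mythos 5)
Your proposal is correct and follows essentially the same route as the paper's proof: dispatch $|S_1|=1$ by noting the sole bidder faces the same (distribution of) reserve as in $x^*$, factor the no-winner probability for $|S_1|\geq 2$ as $(1-p_1')(1-\rho_1)=1-p_1$ using the independence of $v_{i^*}$, and then account for $S_2$ via the probability $p_2/(1-p_1)$ in the final stage. The only difference is that you spell out the sanity checks ($p_1'\leq p_1$, $\rho_1\in[0,1]$, $p_1+p_2\leq 1$) via an explicit coupling, which the paper asserts more tersely.
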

\begin{proof}
    We focus on the case where $|S_1| \geq 2$ since if $|S_1| = 1$ then the only bidder in $S_1$ is facing the same (random) reserve in Mechanism~\ref{alg:two_stage_spa} as it is in $x^*$.

    Let $i^*$ be as defined in Line~\ref{line:two_stage_i*}.
    Note that the allocation $x'$ defined in Line~\ref{line:two_stage_new_alloc} allocates as long as some ad $i \in S_{1}$ exceeds their reserve of $r_i$ as defined in Line~\ref{line:two_stage_reserves1}.
    By definition of $r_i$, this probability is at most $p_{1}$.
    Next, $p_{1}'$ is the probability that some bidder in $S_{1} \setminus \{i^*\}$ exceeds their reserve $r_i$ which is at most the $p_{1}$.
    In Line~\ref{line:redef_two_stage_reserve}, we set the reserve of $i^*$ so that the probability that advertiser $i^*$ fails to meet their reserve is $1-\rho_1 = (1-p_{1}) / (1-p_{1}')$.
    Therefore the probability that some advertiser in $S_{1}$ exceeds their reserve is exactly $1 - (1-\rho_1)(1-p_{1}') = p_{1}$.
    This latter probability is exactly that probability that an ad is shown by $x^*$ and by the second price auction in Line~\ref{line:two_stage_spa}.
    Finally, note that by Line~\ref{line:two_stage_second_spa}, an ad in $S_2$ is shown with probability $p_{2}$.
\end{proof}

Before we prove Lemma~\ref{lem:revenue}, we require the following lemma.
This lemma states that second price with reserves is a $2$-approximation to the optimal mechanism with the same reserves.
This can be viewed as an extension on previous results due to \cite{HR2009simple}, \cite{Fu13}, and \cite{Paeslema16} which prove this for the monopoly reserve.
\begin{lemma}
    \label{lem:generalized_spa_2_approx}
    Suppose we have $n$ independent bidders with virtual value functions $\phi_1, \ldots, \phi_n$.
    Let $r_1, \ldots, r_n$ be such that $r_i \geq \phi_i^{-1}(0)$ for all $i$.
    Let $z^*$ be the revenue-optimal auction \anote{the rest of the sentence is a little confusing as it sounds like the definition of opt} that allocates to bidder $i$ only if their value is at least $r_i$.
    Then running a second price auction with reserve $r_i$ for bidder $i$ extracts at least $1/2$ the revenue as $z^*$.
\end{lemma}
\todo{will come back to aranyak's note.}
\begin{proof}
    Let $x$ be the allocation of the second price auction with reserve $r_i$ for bidder $i$.
    Let $\E_1 = \{ v \,:\, x(v) = z^*(v) \}$ and $\E_2 = \{ v \,:\, x(v) \neq z^*(v)\}$.
    Then
    \begin{equation}
        \label{eq:generalized_spa_approx_1}
        \begin{aligned}
        \expect{v}{\sum_{i=1}^n x_i(v) \phi_i(v_i)}
        \geq \expect{v}{\sum_{i=1}^n x_i(v) \phi_i(v_i) \1[\E_1]}
        = \expect{v}{\sum_{i=1}^n z_i^*(v) \phi_i(v_i) \1[\E_1]},
        \end{aligned}
    \end{equation}
    where the inequality is because if $v_i \geq r_i$ then $\phi_i(v_i) \geq 0$ and $x_i(v) = 1$ only if $v_i \geq r_i$.
    
    Next, let $p_i(v)$ be the revenue generated by SPA with reserves $r_1, \ldots, r_n$ when the bidder values are $v = (v_1, \ldots, v_n)$.
    We claim that on the event $\E_2$, we have $\sum_{i \in [n]} p_i(v) \geq \sum_{i=1}^n z_i^*(v) v_i$.
    Indeed, if $\sum_{i=1}^n z_i^*(v) = 0$ then this is trivial.
    On the other hand, if $\sum_i z_i^*(v) \geq 1$ and $x \neq z^*$ then there must be at least two bidders that exceed their reserve (otherwise, we must have $x = z^*$ since there is only one eligible bidder).
    In SPA with reserves, the payment is at least the second-highest value which is at least the value of the advertiser that wins the auction according to $z^*$.
    Thus, $\sum_{i \in [n]} p_i(v) \geq \sum_{i=1}^n z_i^*(v) v_i$.
    Therefore,
    \begin{equation}
        \label{eq:generalized_spa_approx_2}
        \begin{aligned}
        \expect{v}{\sum_{i=1}^n p_i(v) \1[\E_2]}
        \geq \expect{v}{\sum_{i=1}^n z_i^*(v_i) v_i \1[\E_2]} 
        \geq \expect{v}{\sum_{i=1}^n z_i^*(v_i) \phi_i(v_i) \1[\E_2]},
        \end{aligned}
    \end{equation}
    where the last inequality is because value is always an upper bound on the virtual value.
    Note that the left-hand-side of both Eq.~\eqref{eq:generalized_spa_approx_1} and Eq.~\eqref{eq:generalized_spa_approx_2} give the value of SPA with reserves so combining the two equations completes the proof of the lemma.
\end{proof}

\begin{lemma}
    \label{lem:revenue}
    Fix a state $s$ and disjoint sets $S_1, S_2$ such that $S_1 \cup S_2 = [n]$.
    The auction in Mechanism~\ref{alg:two_stage_spa} obtains revenue at least $\frac{1}{4} \cdot \sum_{i \in S_1} \expect{v}{x_i^*(v) \phi_i(v_i)}$.
\end{lemma}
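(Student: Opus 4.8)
The plan is to lower-bound the revenue of Mechanism~\ref{alg:two_stage_spa} by the revenue extracted in its first stage (Line~\ref{line:two_stage_spa}), since the second stage (Line~\ref{line:two_stage_second_spa}) only runs a reserveless second price auction on $S_2$ when stage one produces no winner, and so contributes only nonnegative additional payment. The case $|S_1| = 1$ is handled directly: the lone bidder $i$ faces reserve $r_i = \phi_i^{-1}(\max\{-\Delta_i, \wphi - \Delta_i\}/\ctr)$, which is precisely the threshold value at which its modified virtual value $\ctr\phi_i(v_i) + \Delta_i$ becomes both nonnegative and at least the sampled $S_2$-competition $\wphi$. Hence $i$ wins exactly when $x^*$ would allocate to it, the second-price payment equals this threshold, and stage one recovers the full contribution $R_1 = \sum_{i\in S_1}\expect{v}{x_i^*(v)\phi_i(v_i)} \geq \frac14 R_1$.

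For $|S_1|\geq 2$ I would first pass to the auxiliary allocation $x'$ of Line~\ref{line:two_stage_new_alloc}, where $x_i'(v) = x_i^*(v)\,\1[i \in S_1 \wedge \phi_i(v_i)\geq 0]$. Zeroing out the summands with $\phi_i(v_i) < 0$ only raises the virtual welfare, so $\sum_{i\in S_1}\expect{v}{x_i'(v)\phi_i(v_i)} \geq R_1$; moreover $x'$ is a threshold sub-allocation of the monotone $x^*$, hence itself monotone, so its revenue equals its virtual welfare. Because $i^*$ (Line~\ref{line:two_stage_i*}) is the minimum-contributing bidder among the $|S_1|\geq 2$ bidders, discarding it costs at most a $1/|S_1| \leq 1/2$ fraction, giving $\sum_{i\in S_1\setminus\{i^*\}}\expect{v}{x_i'(v)\phi_i(v_i)} \geq \tfrac12 R_1$.

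The key step is then to apply Lemma~\ref{lem:generalized_spa_2_approx} after conditioning on the sampled competition $\wphi$ (equivalently on $\tilde v_{S_2}$), which fixes the reserves $r_i$ of Line~\ref{line:two_stage_reserves1}. The explicit $0$ inside the max guarantees $r_i \geq \phi_i^{-1}(0)$ for all $i \in S_1\setminus\{i^*\}$, so the lemma applies and shows that the second price auction with reserves $\{r_i\}$ over $S_1\setminus\{i^*\}$ extracts at least half the revenue of the optimal reserve-respecting auction $z^*$. A direct check shows that meeting reserve $r_i$ is exactly the event that $i$ is eligible in $x'$ when the $S_2$-competition is replaced by the single threshold $\wphi$; thus $x'$ restricted to $S_1\setminus\{i^*\}$ is a feasible reserve-respecting mechanism and $z^*$ weakly dominates it in revenue. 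Conditionally on $\wphi$, this chains to: stage-one SPA revenue $\geq \tfrac12$ times the conditional revenue of $x'$ on $S_1\setminus\{i^*\}$. Finally I would integrate over $\wphi$: since $\tilde v_{S_2}$ is drawn from the same product distribution as the true $v_{S_2}$ and is independent of $v_{S_1}$, the sampled threshold $\wphi$ has the same law as the true competition $\max_{j\in S_2}\wphi_j(v_j,\ctr)$, so the expectation of the conditional $x'$-revenue equals $\sum_{i\in S_1\setminus\{i^*\}}\expect{v}{x_i'(v)\phi_i(v_i)} \geq \tfrac12 R_1$, yielding expected stage-one revenue at least $\tfrac14 R_1$. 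I would close by noting that running the SPA on all of $S_1$ — i.e.\ re-including $i^*$ with the reserve of Line~\ref{line:redef_two_stage_reserve}, which may lie below $\phi_{i^*}^{-1}(0)$ — only weakly increases the payment, because adding an eligible bidder never lowers the second-price payment.

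The main obstacle I anticipate is reconciling the independently sampled competition $\tilde v_{S_2}$ with the actual $v_{S_2}$ that defines $x^*$ and $x'$: the reserves, and hence the entire first-stage auction, are random through $\wphi$, so the approximation of Lemma~\ref{lem:generalized_spa_2_approx} must be proved conditionally and then averaged, which is where the i.i.d.\ and independence structure does the work. A secondary delicacy is keeping every reserve fed into Lemma~\ref{lem:generalized_spa_2_approx} above $\phi_i^{-1}(0)$ (ensured by the $0$ in the max) while still permitting $i^*$'s re-introduced reserve to fall below it without hurting revenue.
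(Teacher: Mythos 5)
Your proposal is correct and follows essentially the same route as the paper's proof: pass to the auxiliary allocation $x'$ (whose virtual-welfare contribution dominates $x^*$'s on $S_1$), drop the minimum-contributing bidder $i^*$ at a cost of at most a factor $2$, and invoke Lemma~\ref{lem:generalized_spa_2_approx} for another factor $2$. You additionally make explicit two steps the paper leaves implicit — conditioning on the sampled competition $\wphi$ and averaging using that $\tilde v_{S_2}$ has the same law as $v_{S_2}$, and the fact that re-introducing $i^*$ into the eager SPA can only raise revenue — both of which are correct.
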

Recall $\sum_{i \in S_1} \expect{v}{x_i^*(v) \phi_i(v_i)}$ is the revenue contributed by advertisers in $S_1$ in $x^*$.
\begin{proof}
    First, we claim that, with the allocation defined in Line~\ref{line:two_stage_new_alloc}, we have $\expect{v}{x_i'(v) \phi_i(v_i)} \geq \expect{v}{x_i^*(v) \phi_i(v_i)}$ for all $i \in S_{1}$.
    Indeed, for $i \in S_{1}$, we have
    \begin{align*}
        \expect{v}{x_i^*(v) \phi_i(v_i)} &
        = \expect{v}{x_i^*(v) \phi_i(v_i) \1[\phi_i(v_i) < 0]}
        + \expect{v}{x_i^*(v) \phi_i(v_i) \1[\phi_i(v_i) \geq 0]} \\
        & \leq \expect{v}{x_i^*(v) \phi_i(v_i) \1[\phi_i(v_i) \geq 0]}  = \expect{v}{x_i'(v) \phi_i(v_i)}.
    \end{align*}
    Hence the mechanism $x'$ obtains at least as much as revenue as $x^*$.
    We now show that the second price auction defined in Line~\ref{line:two_stage_spa} extracts at least a $1/4$ of the revenue from $x'$ (and thus, from $x^*$).
    
    First, suppose that there is only one bidder in $S_1$.
    Let $i$ be this single bidder.
    In this case, in $x^*$, bidder $i$ is just facing a threshold price set by its own reserve and the competition from $S_2$.
    In this case, Mechanism~\ref{alg:two_stage_spa} sets this threshold as a reserve for bidder $i$.
    
    On the other hand, suppose there are at least two bidders in $S_1$.
    Let $x''$ be the auction with the same allocation as $x'$ except $x''_{i^*}(v) = 0$ for all $v$.
    Then the auction $x''$ extracts at least $(1-1/|S_{1}|) \geq 1/2$ fraction of the revenue from $x'$.
    From Lemma~\ref{lem:generalized_spa_2_approx}, a SPA auction with the same reserves $r_i$ for bidders in $S_{1} \setminus \{i^*\}$ extracts at least $1/2$ of the revenue as $x''$.
    Thus, SPA with reserves defined in Line~\ref{line:two_stage_spa} extracts at least $1/2$ of the revenue as $x''$ so is at least $1/4$ of the revenue as $x^*$.
\end{proof}

\subsection{Other Omitted Proofs from Section~\ref{sec:simple-mechanism}}\label{apx:omitted-proofs-simple-mechanism}
Mechanism~\ref{alg:simple-mechanism} describes the final mechanism.
The following lemma shows that Mechanism~\ref{alg:simple-mechanism} extracts at least $1/8$ of the (present) revenue that $x^*$ extracts in each state
while maintaining the same transitions as $x^*$.
\begin{lemma}
    \label{lem:8_approx}
    Mechanism~\ref{alg:simple-mechanism} shows a good (resp.~bad) ad with exactly the same probability as $x^*$
    and extracts at least $1/8$-fraction of the revenue from $x^*$ at each state.
\end{lemma}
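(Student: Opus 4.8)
The plan is to verify the two assertions of the lemma separately, since both follow directly from the guarantees packaged into Lemma~\ref{lem:two_stage_spa_guarantee}. The one structural observation I would record up front is that in \emph{either} branch of Mechanism~\ref{alg:simple-mechanism} the unordered pair $\{S_1, S_2\}$ equals $\{S_{\good}, S_{\bad}\}$; only the labels ``$S_1$'' and ``$S_2$'' get swapped according to the comparison between $R_{\good}$ and $R_{\bad}$. This is what lets me invoke the subroutine's guarantees uniformly across both cases.

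For the transition (first) claim, I would use part~1 of Lemma~\ref{lem:two_stage_spa_guarantee}: the subroutine allocates to some bidder in $S_1$ with exactly the probability $x^*$ does, and likewise for $S_2$. Since $\{S_1,S_2\}=\{S_{\good},S_{\bad}\}$ in both branches, the probability of showing a good ad and the probability of showing a bad ad each coincide with those under $x^*$, and hence so does the complementary probability of showing no ad, as we are in the single-slot setting. Because we have assumed the state transition depends only on the quality of the ad shown, the induced transition kernel $P(\cdot\mid a,\ctr)$ is a function of precisely these three probabilities; matching them matches the transition exactly, giving the first claim.

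For the revenue (second) claim, write the per-round virtual-welfare revenue of $x^*$ as $\ctr\,(R_{\good}+R_{\bad})$, which is nonnegative because it equals the expected payment of an IC auction. Assume without loss of generality that $R_{\good}\ge R_{\bad}$, so the mechanism runs the subroutine with $S_1=S_{\good}$ (the other branch being symmetric). Then $R_{\good}\ge \tfrac12(R_{\good}+R_{\bad})\ge 0$, so by part~2 of Lemma~\ref{lem:two_stage_spa_guarantee} the revenue the mechanism collects from $S_1$ is at least $\tfrac14\,\ctr R_{\good}$. The revenue collected from $S_2$ is nonnegative, since payments in a second-price auction with reserves are nonnegative, so discarding that contribution costs nothing in the lower bound. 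Combining, the mechanism's revenue is at least $\tfrac14\,\ctr R_{\good}\ge \tfrac18\,\ctr(R_{\good}+R_{\bad})$, i.e.\ $\tfrac18$ of the revenue of $x^*$.

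The step I expect to be the most delicate is the sign bookkeeping in the revenue argument. Because $x^*$ maximizes the \emph{modified} virtual value, it may allocate to ads of negative ordinary virtual value purely for the sake of favorable long-term transitions, so neither $R_{\good}$ nor $R_{\bad}$ is individually guaranteed to be nonnegative. The argument only closes because the sum $R_{\good}+R_{\bad}$ is nonnegative, which forces the larger of the two---exactly the set promoted to the role $S_1$---to be at least half the total and in particular nonnegative; this is precisely why Mechanism~\ref{alg:simple-mechanism} compares $R_{\good}$ with $R_{\bad}$ and routes the larger contributor into the role carrying the $1/4$ revenue guarantee.
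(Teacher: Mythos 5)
Your proposal is correct and follows the same route as the paper: the paper's proof is a two-line appeal to Lemma~\ref{lem:two_stage_spa_guarantee} together with the observation that $\max\{R_{\good},R_{\bad}\}\ge \tfrac12(R_{\good}+R_{\bad})$, which is exactly your argument spelled out in more detail. (One small quibble: the inequality $\max\{R_{\good},R_{\bad}\}\ge\tfrac12(R_{\good}+R_{\bad})$ holds for any reals, so the nonnegativity of $R_{\good}+R_{\bad}$ is not actually load-bearing for that step; it only matters for dropping the $S_2$ contribution and for the final bound to be meaningful, both of which you handle correctly.)
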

\begin{proof}
    The lemma follows directly from Lemma~\ref{lem:two_stage_spa_guarantee}
    and the fact that $\max\{R_{\good}, R_{\bad}\}$ is already a $1/2$-approximation to the revenue extracted from $x^*$ at each state.
\end{proof}
\begin{algorithm}[ht!]
\floatname{algorithm}{Mechanism}
\caption{Simple mechanism in each round of the MDP}\label{alg:simple-mechanism}
\begin{algorithmic}[1]
\STATE \textbf{Input}: Value distribution $\F_i$ for each bidder $i$, current state $\ctr$, qualities $q_1, \ldots, q_n$, optimal mechanism $x^*$ from Corollary~\ref{cor:opt_mdp_auction}.
\STATE Let $S_{\good} = \{ i \,:\, q_i = 1\}$ and $S_{\bad} = \{ i \,:\, q_i = -1\}$.
\STATE Let $R_{\good} = \sum_{i \in S_{\good}} \expect{v}{x^*_i(v) \phi_i(v_i)}$ and $R_{\bad} = \sum_{i\in S_{\bad}} \expect{v}{x^*_i(v) \phi_i(v_i)}$.
\IF{$R_{\good} \geq R_{\bad}$}
    \STATE Run Mechanism~\ref{alg:two_stage_spa} (Appendix~\ref{subsec:two_stage_spa}) with $S_1 = S_{\good}$ and $S_2 = S_{\bad}$.
\ELSE
    \STATE Run Mechanism~\ref{alg:two_stage_spa} (Appendix~\ref{subsec:two_stage_spa}) with $S_1 = S_{\bad}$ and $S_2 = S_{\good}$.
\ENDIF
\end{algorithmic}
\end{algorithm}

We have proved that we can $8$-approximate the revenue at each state and maintain the transition probability the same as the optimal mechanism.
We now show how this implies an $8$-approximation to the optimal MDP policy, thus proving the main theorem in this section.
\begin{proof}[Proof of Theorem~\ref{thm:simple_main}]
    Recall that the discounted reward for a policy $\pi$ is
    \begin{align*}
        \expect{\pi}{\sum_{t=0}^{\infty} \gamma^t R(\ctr_t, a_t)}
        = \sum_{t=0}^{\infty} \gamma^t \expect{\pi}{R(\ctr_t, a_t)}
        = \sum_{t=0}^{\infty} \gamma^t \expect{\pi}{\expect{}{R(\ctr_t,a_t)|\ctr_t}}.
    \end{align*}
    Let $\pi$ correspond to the policy by running Mechanism~\ref{alg:simple-mechanism} in each state and let $\pi^*$ correspond to the optimal MDP policy from Corollary~\ref{cor:opt_mdp_auction}.
    Let $a_t$ (resp.~$a_t^*$) denote the (random) auction chosen by $\pi$ (resp.~$\pi^*$) at time $t$.
    Note that, conditioned on $\ctr_t$, both $a_t$ and $a_t^*$ are actually deterministic.
    Then, by Lemma~\ref{lem:8_approx}, we have $\expect{}{R(\ctr_t,a_t) | \ctr_t} \geq \frac{1}{8} \expect{}{R(\ctr_t,a_t^*) | \ctr_t}$.
    Since the distribution of $\ctr_t$ under policy $\pi$ or $\pi^*$ is the same, taking expectations on both sides completes the proof.
\end{proof}

\end{document}